\documentclass[12pt]{article}
\usepackage{jheppub}
\usepackage[utf8]{inputenc}
\usepackage{amsmath}
\usepackage{tikz}
\usepackage{tkz-euclide}
\usepackage{dsfont}
\usepackage{amssymb}
\usepackage{mathrsfs}
\usepackage{appendix}
\usepackage{physics}
\usepackage{bbold}
\usepackage{color}
\usepackage{tikz}
\usepackage{pgfplots}
\pgfplotsset{compat=1.18}
\usepackage{float}
\usepackage{amsfonts}
\usepackage{bm}
\usepackage[colorlinks=true]{hyperref}

\newcommand{\cA}{\mathcal{A}}

\newcommand{\beq}{\begin{equation}\begin{aligned}}
\newcommand{\eeq}{\end{aligned}\end{equation}}
\usepackage[T1]{fontenc}
\usepackage[utf8]{inputenc}
\usepackage{lmodern}
\usepackage{amsmath,amssymb,amsfonts,amsthm,mathtools}
\newtheorem{theorem}{Theorem}

\usepackage{bm}
\usepackage{booktabs}
\usepackage{array}
\usepackage{tabularx}
\usepackage{longtable}
\usepackage{enumitem}
\usepackage{mathrsfs}
\usepackage{hyperref}
\usepackage[nameinlink,noabbrev]{cleveref}
\usepackage{microtype}
\usepackage{lineno}
\usepackage{amsmath,amssymb,amsfonts,amsthm}
\usepackage{physics}

\newcommand{\cM}{\mathcal{M}}

\newcommand{\cE}{\mathcal{E}}
\newcommand{\cL}{\mathcal{L}}
\newcommand{\cF}{\mathcal{F}}

\newcommand{\RR}{\mathbb{R}}
\newcommand{\ZZ}{\mathbb{Z}}

\newcommand{\AdS}{\mathrm{AdS}}

\newcommand{\eps}{\varepsilon}

\newcommand{\diag}{\mathrm{diag}}
\newcommand{\Id}{\mathbf 1}

\begin{document}
\title{A T\textbf{-}Fold Black Hole in Doubled Type~IIB }
\author[1,2,3,4]{Mir Faizal}
\author[2]{Arshid Shabir}

\affiliation[1]{Irving K. Barber School of Arts and Sciences, University of British Columbia Okanagan, Kelowna, BC V1V 1V7, Canada}
\affiliation[2]{Canadian Quantum Research Center, 460 Doyle Ave 106, Kelowna, BC V1Y 0C2, Canada}
\affiliation[3]{Department of Mathematical Sciences, Durham University, Upper Mountjoy, Stockton Road, Durham DH1 3LE, UK}
\affiliation[4]{Computational Mathematics Group, Faculty of Sciences, Hasselt University, Agoralaan Gebouw D, Diepenbeek, 3590 Belgium}
\emailAdd{mirfaizalmir@gmail.com}
\emailAdd{aslone186@gmail.com}
\abstract{ 
We construct an asymptotically flat T-fold representative of a four-dimensional dyonic black-hole charge orbit in doubled type-IIB theory. Starting from the F1-P-NS5-KKM toroidal seed, an integral parabolic T-duality monodromy is imposed on an active doubled three-torus. The non-geometric data enter only through this global patching: the Reissner-Nordstr\"om-type term is sourced by conserved four-dimensional electric and magnetic field strengths in the doubled Kaluza-Klein/winding local system, not by an internal algebraic \(Q\)-flux alone. Exact duality preserves the local equations, the BPS index, and the \(O(6,6)\)-invariant of the NS-NS/\(N=4\) charge sublattice governing the entropy. The minimal Einstein-Maxwell-dilaton slice contains a running-scalar branch and an equal-charge Reissner-Nordstr\"om limit; the extremal near-horizon region is locally \(\mathrm{AdS}_2\times S^2\), with the compact factor globally T-duality patched.
}
\date{ }
\maketitle

\section{Introduction}
\label{sec:intro}

T-duality is an exact perturbative symmetry of string theory on tori.  On a circle it exchanges momentum and winding quantum numbers while inverting the radius.  On a $d$-torus it is enlarged to the arithmetic Narain group $O(d,d;\ZZ)$, which acts simultaneously on the lattice of momenta and windings and on the background metric and two-form.  Consequently a string background need not admit a single global description by ordinary Riemannian fields.  It may be locally geometric while transition functions between local charts include exact T-duality transformations.  Such spaces are T-folds \cite{Kikkawa:1984cp,Sakai:1986vg,Narain:1985jj,Narain:1986am,Buscher:1987sk,Buscher:1987qj,Meissner:1991zj,Giveon:1994fu,Maharana:1992my,Hull:2004in,Dabholkar:2005ve}.

Double field theory (DFT) is a local field-theoretic language adapted to this possibility.  The torus coordinates are supplemented by dual winding coordinates, the metric and two-form are combined into a generalised metric, and the dilaton is combined with the metric determinant into the doubled dilaton.  The local DFT action is covariant under constant $O(d,d)$ transformations and is compatible with the strong constraint, which locally selects an ordinary physical section \cite{Tseytlin:1990nb,Tseytlin:1990va,Siegel:1993xq,Siegel:1993th,Hull:2009mi,Hohm:2010pp,Hohm:2010jy,Hohm:2010xq,Hohm:2011si,Aldazabal:2013sca}.  The same structures are naturally related to generalized geometry and Courant algebroids \cite{Courant:1990,Hitchin:2002,Gualtieri:2011dx}.  If the transition functions lie in $O(d,d;\ZZ)$, the doubled formulation gives a natural description of the global patching even when a single conventional metric and two-form do not exist globally.  Generalized Scherk-Schwarz reductions, gauged DFT and exceptional extensions provide complementary lower-dimensional descriptions of the same duality covariance \cite{Geissbuhler:2011mx,Grana:2012rr,Berman:2011pe,Hohm:2013pua}.

The standard flux chain organises the same physics in a local language.  Starting from a three-form flux, successive T-dualities lead schematically to
\begin{equation}
  H_{abc}\;\xleftrightarrow{T_a}\; f^a{}_{bc}
  \;\xleftrightarrow{T_b}\; Q_c{}^{ab}
  \;\xleftrightarrow{T_c}\; R^{abc}.
  \label{eq:fluxchain}
\end{equation}
The $H$ and $f$ entries have ordinary local geometric interpretations.  The $Q$ and $R$ entries are non-geometric in general.  In a local beta-frame one may write a component such as $Q_1{}^{23}$ as a derivative of a bivector, but the invariant datum is the monodromy in the integral duality group.  Flux compactifications and non-geometric duality frames have been developed from both the supergravity and world-sheet viewpoints \cite{Kachru:2002he,Grana:2005jc,Douglas:2006es,Hellerman:2002ax,Flournoy:2004vn,Shelton:2005cf,Hull:2009sg,DallAgata:2005zlf,Aldazabal:2006up,Andriot:2011uh,Blumenhagen:2011ph,Hassler:2014sba,Plauschinn:2018wbo}.  Exotic branes, especially the $5^2_2$ brane, provide brane representatives of such monodromies \cite{Obers:1998fb,Tong:2002rq,Harvey:2005ab,deBoer:2012ma,Kimura:2014wsa}.

The construction below places a four-dimensional black hole in this setting while keeping distinct two physical notions that are often conflated.  The first is a global non-geometric monodromy of the compactification.  The second is a conserved four-dimensional electric or magnetic charge that sources the black-hole metric.  A local internal expression such as $Q_1{}^{23}=n$ is a convenient representative of a large T-duality patching on the active torus.  It is not an external stress tensor.  The Reissner-Nordstrom-like term in the four-dimensional metric is sourced by four-dimensional gauge field strengths.  In the T-fold representative these gauge fields are sections of a doubled Kaluza-Klein/winding vector bundle whose transition functions include the same non-geometric monodromy.

The construction is a duality-orbit construction.  One begins with a standard type-IIB toroidal four-charge black hole.  A convenient geometric representative is the F1-P-NS5-KKM system, a standard member of the four-charge type-II black-hole orbit \cite{Cvetic:1995uj,Duff:1995sm,Johnson:1996ga,Maldacena:1996gb}.  Its electric and magnetic charges are naturally described by two Narain vectors in the $O(6,6;\ZZ)$ charge lattice, and the entropy is controlled by the manifest $N=4$ quartic invariant.  The usual STU invariant is the restriction of this invariant to a four-charge sublattice.  An integral parabolic $O(3,3;\ZZ)$ transformation is then applied on an active doubled three-torus and embedded into the full $O(6,6;\ZZ)$ lattice.  Locally this operation is an ordinary duality transformation of the fields.  Globally it is used as a transition function, and therefore produces a T-fold representative of the same charge orbit.

The main theorem may be stated informally as follows.  Let a type-IIB F1-P-NS5-KKM toroidal black-hole seed be written locally in DFT variables, with charges represented by Narain electric and magnetic vectors.  Let an integral parabolic $O(3,3;\ZZ)$ transformation act on an active doubled three-torus and be embedded into the full $O(6,6;\ZZ)$ charge lattice.  The transformed local fields solve the same local equations as the seed and glue globally by exact T-duality.  The resulting background is a T-fold representative of the same charge orbit.  Its entropy and BPS index are unchanged because the $N=4$ quartic invariant and the indexed degeneracy are invariant under the integral duality transformation.

The active compactification factor will be denoted $T_Q^3$.  It is doubled in the DFT description, while a spectator $T_s^3$ completes the six internal directions.  The non-geometric datum is an integer $n$ specifying a beta-transform monodromy on $T_Q^3$.  The black-hole charge vector after the transformation is denoted by a subscript $Q$ to indicate that it is represented in a $Q$-patched local system.  This subscript does not mean that the horizon area is fixed by the monodromy integer alone.  The area is fixed by the duality invariant of the four-dimensional charge vector.

In addition to the full type-IIB charge orbit, a minimal Einstein-Maxwell-dilaton slice is displayed.  This slice is useful because it gives elementary formulae for the running-scalar branch and for the equal-charge constant-scalar branch.  The equal-charge branch is Reissner-Nordstrom-like, but it is not a new local solution sourced by internal flux.  It is the standard equal electric/magnetic limit of a dyonic EMD black hole, represented in a non-geometric charge local system.

\section{Doubled \texorpdfstring{$Q$}{Q}-Monodromy Geometry}
\label{sec:geometry}

Let type-IIB string theory be compactified on a rectangular six-torus
\begin{equation}
  T^6=T_Q^3\times T_s^3.
  \label{eq:T6split-final}
\end{equation}
The active factor $T_Q^3$ has dimensionless coordinates $\sigma^i\sim\sigma^i+1$ and physical periods $L_i$, while the spectator factor $T_s^3$ has dimensionless coordinates $\zeta^\alpha\sim\zeta^\alpha+1$ and physical periods $L_{s,\alpha}$.  The corresponding physical coordinates are
\begin{equation}
  y^i=L_i\sigma^i,
  \qquad
  z^\alpha=L_{s,\alpha}\zeta^\alpha.
  \label{eq:periods-main-final}
\end{equation}
In the explicit ten-dimensional seed used below we identify
\begin{equation}
  \sigma^1=\psi,\qquad \sigma^2=y,\qquad \sigma^3=\chi,
  \label{eq:active-identification-final}
\end{equation}
so the active torus is
\begin{equation}
  T_Q^3=S^1_\psi\times S^1_y\times S^1_\chi,
  \qquad
  T_s^3=T_z^3.
  \label{eq:active-spectator-final}
\end{equation}
The parabolic monodromy is therefore a beta-shift in the $y$-$\chi$ plane around the $\psi$-circle.  In the Narain basis used below, $e_1,e_2,e_3$ correspond respectively to $\psi,y,\chi$.
The active torus is doubled by introducing winding coordinates $\widetilde\sigma_i$, so that
\begin{equation}
  Y^M=(\sigma^1,\sigma^2,\sigma^3;\widetilde\sigma_1,\widetilde\sigma_2,\widetilde\sigma_3),
  \qquad M=1,\ldots,6.
  \label{eq:doubledY-final}
\end{equation}
The $O(3,3)$ metric is
\begin{equation}
  \eta=\begin{pmatrix}0&\Id_3\\ \Id_3&0\end{pmatrix}.
  \label{eq:eta3-final}
\end{equation}
A matrix $g$ belongs to $O(3,3)$ precisely when $g^T\eta g=\eta$, and it belongs to $O(3,3;\ZZ)$ when it also preserves the integral doubled lattice.
The type-IIB seed is the four-charge F1-P-NS5-KKM system.  We take the F1 and momentum charges along the active circle $y$, the NS5-brane wrapped on $y,\chi,z^1,z^2,z^3$, and the KKM with Taub-NUT fibre $\psi$ and world-volume $y,\chi,z^1,z^2,z^3$.  The precise local ten-dimensional form is recorded in Appendix \ref{app:local-seed}.  The four integer charges are denoted
\begin{equation}
  N_{\rm F1},\qquad N_{\rm P},\qquad N_{\rm NS5},\qquad N_{\rm KKM}.
  \label{eq:iibintcharges-final}
\end{equation}
They may be packaged in a pair of electric and magnetic Narain vectors, following the standard toroidal charge-lattice description \cite{Narain:1985jj,Narain:1986am,Maharana:1992my,Giveon:1994fu}
\begin{equation}
  Q,P\in \Gamma^{6,6},
  \label{eq:QP-Narain}
\end{equation}
where $\Gamma^{6,6}$ is the even self-dual Narain lattice.  Let $L$ denote the $O(6,6)$ bilinear form.  We define
\begin{equation}
  Q^2=Q^TLQ,
  \qquad
  P^2=P^TLP,
  \qquad
  Q\cdot P=Q^TLP.
  \label{eq:QP-products}
\end{equation}
The manifest $N=4$ quartic invariant is
\begin{equation}
  \Delta(Q,P)=Q^2P^2-(Q\cdot P)^2.
  \label{eq:Delta-def}
\end{equation}
Although type-IIB string theory on $T^6$ has the larger U-duality group of the maximal theory, the F1-P-NS5-KKM orbit used here lies in the NS-NS/$N=4$ sublattice.  The invariant $\Delta$ is the restriction of the full duality invariant to this sublattice, and the STU invariant is its four-charge representative.
For the F1-P-NS5-KKM seed we choose the standard normalisation
\begin{equation}
  Q^2=2N_{\rm F1}N_{\rm P},
  \qquad
  P^2=2N_{\rm NS5}N_{\rm KKM},
  \qquad
  Q\cdot P=0.
  \label{eq:Delta-seed-normalization}
\end{equation}
Hence
\begin{equation}
  \Delta(Q,P)=4N_{\rm F1}N_{\rm P}N_{\rm NS5}N_{\rm KKM}.
  \label{eq:Delta-seed}
\end{equation}
The extremal entropy in this charge orbit is
\begin{equation}
  S_{\rm BH}=\frac{\pi}{G_4}\sqrt{|\Delta(Q,P)|}.
  \label{eq:entropyDelta-main}
\end{equation}
In the integer microscopic normalisation this is equivalent to
\begin{equation}
  S_{\rm micro}=2\pi\sqrt{|N_{\rm F1}N_{\rm P}N_{\rm NS5}N_{\rm KKM}|}.
  \label{eq:microIIB-main}
\end{equation}

The STU description is a four-charge representative of the same invariant and is the conventional language for the corresponding attractor solution \cite{Ferrara:1995ih,Ferrara:1996dd,Ferrara:1996um,Strominger:1996kf,Denef:2000nb}.  Let
\begin{equation}
  \Gamma_{\rm STU}=(p^0,p^1,p^2,p^3;q_0,q_1,q_2,q_3).
  \label{eq:STUvec-final}
\end{equation}
A convenient type-IIB assignment is
\begin{equation}
  q_0=N_{\rm P},\qquad
  p^1=N_{\rm F1},\qquad
  p^2=N_{\rm NS5},\qquad
  p^3=N_{\rm KKM},
  \label{eq:STUassign-final}
\end{equation}
with the remaining entries zero.  The STU quartic invariant in this convention is
\begin{equation}
  I_4(\Gamma)=
  -\bigl(p^\Lambda q_\Lambda\bigr)^2
  +4\left(q_0p^1p^2p^3-p^0q_1q_2q_3
  +\sum_{1\le i<j\le 3}p^iq_ip^jq_j\right).
  \label{eq:I4-final}
\end{equation}
On the sublattice \eqref{eq:STUassign-final},
\begin{equation}
  I_4(\Gamma_{\rm STU})=\Delta(Q,P)=4N_{\rm F1}N_{\rm P}N_{\rm NS5}N_{\rm KKM}.
  \label{eq:I4Delta-final}
\end{equation}
Thus the STU formula is a representative of the manifest $N=4$ invariant, not an additional assumption.
Now define the parabolic beta-transform on the active doubled three-torus.  Let $\eps^{23}$ be the antisymmetric $3\times3$ matrix with
\begin{equation}
  (\eps^{23})_{23}=1,
  \qquad
  (\eps^{23})_{32}=-1,
  \label{eq:eps23-final}
\end{equation}
and all other entries zero.  Set
\begin{equation}
  g_Q(n)=\begin{pmatrix}\Id_3&n\eps^{23}\\0&\Id_3\end{pmatrix}.
  \label{eq:gQ-final}
\end{equation}
A direct multiplication gives
\begin{align}
  g_Q(n)^T\eta g_Q(n)
  &=\begin{pmatrix}\Id_3&0\\ n(\eps^{23})^T&\Id_3\end{pmatrix}
    \begin{pmatrix}0&\Id_3\\ \Id_3&0\end{pmatrix}
    \begin{pmatrix}\Id_3&n\eps^{23}\\0&\Id_3\end{pmatrix} \notag\\
  &=\begin{pmatrix}0&\Id_3\\ \Id_3&n\eps^{23}+n(\eps^{23})^T\end{pmatrix}
   =\eta,
  \label{eq:gQeta-proof}
\end{align}
using antisymmetry of $\eps^{23}$.  Therefore $g_Q(n)\in O(3,3)$, and $g_Q(n)$ lies in $O(3,3;\ZZ)$ precisely when
\begin{equation}
  n\in \ZZ.
  \label{eq:ninteger-final}
\end{equation}
This is the invariant monodromy quantisation.  No dimensionful quantity appears in this condition.
A local beta-frame representative of the monodromy is
\begin{equation}
  U_Q(\sigma^1)=\begin{pmatrix}\Id_3&\beta(\sigma^1)\\0&\Id_3\end{pmatrix},
  \qquad
  \beta^{23}(\sigma^1)=n\sigma^1.
  \label{eq:local-beta-final}
\end{equation}
It obeys
\begin{equation}
  U_Q(\sigma^1+1)U_Q(\sigma^1)^{-1}=g_Q(n).
  \label{eq:local-monodromy-final}
\end{equation}
In this local representative one writes
\begin{equation}
  Q_1{}^{23}=\partial_{\sigma^1}\beta^{23}=n.
  \label{eq:localQ-final}
\end{equation}
Equation \eqref{eq:localQ-final} records the global T-duality patching \eqref{eq:local-monodromy-final}.  It is not inserted as a local four-dimensional energy density.  Equivalently, one may cover the active circle by patches on which the local representative is constant and the only non-trivial information is the transition function $g_Q(n)$.
The active monodromy embeds into the full Narain group by acting trivially on spectator directions:
\begin{equation}
  h_Q(n)=\diag\bigl(g_Q(n),\Id_{6}\bigr)\in O(6,6;\ZZ),
  \label{eq:hQ-final}
\end{equation}
where the block decomposition separates active and spectator doubled directions.  Since
\begin{equation}
  h_Q(n)^TLh_Q(n)=L,
  \label{eq:hQ-preserves-L}
\end{equation}
any pair of Narain charge vectors transforms as
\begin{equation}
  Q_Q=h_QQ,
  \qquad
  P_Q=h_QP,
  \label{eq:QP-transform-final}
\end{equation}
without changing $Q^2$, $P^2$ or $Q\cdot P$.  Therefore
\begin{equation}
  \Delta(Q_Q,P_Q)=\Delta(Q,P).
  \label{eq:Delta-preserved-final}
\end{equation}
This is the manifest invariant statement behind the T-fold black-hole construction. 
A useful way of seeing the invariant is to choose a lightlike basis of the Narain lattice.  Let
\begin{equation}
  e_I\cdot e_J=0,
  \qquad
  \widetilde e^{\,I}\cdot \widetilde e^{\,J}=0,
  \qquad
  e_I\cdot \widetilde e^{\,J}=\delta_I{}^J,
  \qquad I,J=1,\ldots,6.
  \label{eq:lightlike-main}
\end{equation}
A representative of the F1-P electric charge may be chosen as
\begin{equation}
  Q=N_{\rm P}e_2+N_{\rm F1}\widetilde e^{\,2},
  \label{eq:Qseed-main}
\end{equation}
while a representative of the NS5-KKM magnetic charge may be chosen as
\begin{equation}
  P=N_{\rm KKM}e_1+N_{\rm NS5}\widetilde e^{\,1}.
  \label{eq:Pseed-main}
\end{equation}
The pair \eqref{eq:Qseed-main}-\eqref{eq:Pseed-main} gives exactly the products in \eqref{eq:Delta-seed-normalization}.  Other type-IIB representatives related by the integral duality group are physically equivalent.  The point of writing \eqref{eq:Qseed-main}-\eqref{eq:Pseed-main} is not to privilege a particular geometric frame, but to make the invariant computation explicit.

The active $O(3,3;\ZZ)$ monodromy acts on the sublattice spanned by $e_1,e_2,e_3$ and $\widetilde e^{\,1},\widetilde e^{\,2},\widetilde e^{\,3}$.  On the seed representative above, $h_Q$ mixes only the active components and leaves the bilinear products unchanged.  Consequently, even if the transformed charge vector is no longer displayed in the simple four-entry STU form \eqref{eq:STUassign-final}, it remains in the same full Narain orbit and has the same value of $\Delta$.  This is the precise mathematical reason that the T-fold representative has the same entropy as the geometric seed.
The distinction between the monodromy and the black-hole charge is also visible in this basis.  The integer $n$ specifies the automorphism $h_Q$ of the lattice.  The entries of $Q$ and $P$ specify the black-hole charge.  Acting with $h_Q$ changes the component description of $Q$ and $P$, but it does not turn the monodromy integer into the horizon-area charge.  The invariant area depends on the transformed charge vectors, not on the transition function alone.

\section{Duality-Orbit DFT Patching}
\label{sec:duality-dft}

The toroidal reduction of the NS-NS sector produces vector fields in the fundamental representation of the active $O(3,3)$ group,
\begin{equation}
  \cA_\mu{}^M=(A_\mu{}^i,\widetilde A_{\mu i}),
  \label{eq:Avec-final}
\end{equation}
where $A_\mu{}^i$ are Kaluza-Klein vectors and $\widetilde A_{\mu i}$ descend from the two-form with one leg on the active torus.  In the abelian sector used below,
\begin{equation}
  \cF_{\mu\nu}{}^M=2\partial_{[\mu}\cA_{\nu]}{}^M.
  \label{eq:Fvec-final}
\end{equation}
The local Einstein-frame vector kinetic term is
\begin{equation}
  \cL_{\rm vec}=-\frac14\sqrt{-g_4}\,\cM_{MN}\cF_{\mu\nu}{}^M\cF^{N\mu\nu},
  \label{eq:vec-lag-final}
\end{equation}
where $\cM_{MN}$ is the scalar matrix on $O(3,3)/(O(3)\times O(3))$.
The active magnetic and electric charges are
\begin{equation}
  p^M=\frac1{4\pi}\int_{S^2}\cF^M,
  \qquad
  q_M=\frac1{4\pi}\int_{S^2}\cM_{MN}\star\cF^N.
  \label{eq:pq-active-final}
\end{equation}
The doubled electric/magnetic vector is
\begin{equation}
  \Gamma_{\rm dw}=(p^M,q_M),
  \label{eq:GammaDW-final}
\end{equation}
with symplectic pairing
\begin{equation}
  \langle \Gamma_{\rm dw},\Gamma'_{\rm dw}\rangle=p^Mq'_M-q_Mp'{}^M.
  \label{eq:Dirac-final}
\end{equation}
On a T-fold overlap $U_a\cap U_b$ with transition function $g_{ab}\in O(3,3;\ZZ)$,
\begin{equation}
  \cM_{(b)}=g_{ab}^T\cM_{(a)}g_{ab},
  \qquad
  \cF_{(b)}{}^M=(g_{ab}^{-1})^M{}_N\cF_{(a)}{}^N.
  \label{eq:patch-FM-final}
\end{equation}
Hence
\begin{equation}
  p_{(b)}^M=(g_{ab}^{-1})^M{}_Np_{(a)}^N,
  \qquad
  (q_{(b)})_M=(g_{ab}^T)_M{}^N(q_{(a)})_N.
  \label{eq:pq-transform-final}
\end{equation}
The electric/magnetic lift of $g_Q$ is
\begin{equation}
  \mathbb G_Q=\begin{pmatrix}g_Q^{-1}&0\\0&g_Q^T\end{pmatrix},
  \qquad
  \Gamma_Q=\mathbb G_Q\Gamma.
  \label{eq:GQ-final}
\end{equation}
It preserves the symplectic form.  Indeed,
\begin{align}
  \langle \mathbb G_Q\Gamma,\mathbb G_Q\Gamma'\rangle
  &=(g_Q^{-1})^M{}_Np^N(g_Q^T)_M{}^Pq'_P
   -(g_Q^T)_M{}^Pq_P(g_Q^{-1})^M{}_Np'{}^N \notag\\
  &=p^Nq'_N-q_Np'{}^N
   =\langle \Gamma,\Gamma'\rangle.
  \label{eq:sympl-proof-final}
\end{align}
The stress tensor is also globally defined because
\begin{equation}
  \cM_{(b)MN}\cF_{(b)\mu\rho}{}^M\cF_{(b)\nu}{}^{N\rho}
  =\cM_{(a)PQ}\cF_{(a)\mu\rho}{}^P\cF_{(a)\nu}{}^{Q\rho}.
  \label{eq:stress-patch-final}
\end{equation}
Thus the local external metric couples to a single-valued stress tensor, even though the higher-dimensional origin of a vector as a metric vector or two-form vector is patch dependent.
In components, write a magnetic vector as $p^M=(p^1,p^2,p^3;\widetilde p_1,\widetilde p_2,\widetilde p_3)$ and an electric covector as $q_M=(q_1,q_2,q_3;\widetilde q^{1},\widetilde q^{2},\widetilde q^{3})$.  With $(\eps^{23})_{23}=1$ and $(\eps^{23})_{32}=-1$, the beta-shift gives
\begin{align}
  p_Q^1&=p^1,&
  p_Q^2&=p^2-n\widetilde p_3,&
  p_Q^3&=p^3+n\widetilde p_2,
  \label{eq:p-components-final}\\
  \widetilde p_{Q,i}&=\widetilde p_i,&
  (q_Q)_i&=q_i,&
  \widetilde q_Q^{1}&=\widetilde q^{1},
  \label{eq:q-components-first-final}\\
  \widetilde q_Q^{2}&=\widetilde q^{2}-nq_3,
  &
  \widetilde q_Q^{3}&=\widetilde q^{3}+nq_2.
  \label{eq:q-components-final}
\end{align}
The signs in \eqref{eq:q-components-final} follow directly from $q_Q=g_Q^Tq$ and the antisymmetry convention \eqref{eq:eps23-final}.
The minimal EMD slice is obtained by choosing a one-dimensional scalar geodesic in the $O(3,3)$ coset and a primitive vector in the doubled lattice.  Let
\begin{equation}
  v_Q^M=(0,0,0;0,1,0)
  \label{eq:vQ-final}
\end{equation}
in a local patch.  Under the inverse patching action,
\begin{equation}
  v_{Q,(b)}=g_Q^{-1}v_{Q,(a)}=(0,0,n;0,1,0).
  \label{eq:vQ-patched-final}
\end{equation}
Thus the same one-dimensional gauge-field subspace is represented on the second patch by a mixture of Kaluza-Klein and winding components.  Locally one may write
\begin{equation}
  \cF^M=v_Q^M F_Q,
  \label{eq:EMDprojection-final}
\end{equation}
where $F_Q$ is an ordinary four-dimensional field strength in that patch.  Globally the section $v_Q$ transforms by the T-fold local system.
Let the scalar matrix be restricted to a geodesic such that
\begin{equation}
  \frac18\Tr(\partial_\mu\cM\partial^\mu\cM^{-1})=-2(\partial\phi)^2,
  \qquad
  \cM_{MN}v_Q^Mv_Q^N=e^{-2\phi}.
  \label{eq:geodesic-final}
\end{equation}
Substituting \eqref{eq:EMDprojection-final} into \eqref{eq:vec-lag-final} gives the EMD gauge coupling used in Section \ref{sec:emd-rn}.  This exhibits the EMD black hole as a local slice of the doubled vector system.
The DFT patching theorem is most transparent in a reduced external/internal block, in the same spirit as duality-covariant reductions and black-hole duality-covariant thermodynamics \cite{Geissbuhler:2011mx,Grana:2012rr,Hassler:2014sba,Arvanitakis:2016xjz}.  Let $\widehat M=(\mu,M)$, where $\mu$ is an external physical index and $M$ is an active doubled index.  A convenient block form of the reduced generalised metric is
\begin{equation}
  \mathscr H_{\widehat M\widehat N}=
  \begin{pmatrix}
  g_{\mu\nu}+\cA_\mu{}^M\cM_{MN}\cA_\nu{}^N
  & \cA_\mu{}^M\cM_{MN}
  \\
  \cM_{MN}\cA_\nu{}^N & \cM_{MN}
  \end{pmatrix}.
  \label{eq:Hblock-final}
\end{equation}
Spectator blocks are appended diagonally and are inert under the active monodromy.  On an overlap define
\begin{equation}
  \cM_{(b)}=g_{ab}^T\cM_{(a)}g_{ab},
  \qquad
  \cA_{(b)\mu}=g_{ab}^{-1}\cA_{(a)\mu},
  \qquad
  g_{(b)\mu\nu}=g_{(a)\mu\nu},
  \qquad
  d_{(b)}=d_{(a)}.
  \label{eq:patch-data-final}
\end{equation}
Let
\begin{equation}
  G_{ab}=\begin{pmatrix}\Id_4&0\\0&g_{ab}\end{pmatrix}.
  \label{eq:Gab-final}
\end{equation}
Substitution into \eqref{eq:Hblock-final} gives
\begin{equation}
  \mathscr H_{(b)}=G_{ab}^T\mathscr H_{(a)}G_{ab}.
  \label{eq:Hpatch-final}
\end{equation}

\begin{theorem}[T-fold representative by exact DFT patching]
\label{thm:t-fold-final}
Let a type-IIB F1-P-NS5-KKM toroidal black-hole seed be written locally in DFT-adapted variables, independent of the active internal coordinates.  Let $g_Q(n)\in O(3,3;\ZZ)$ be the parabolic element \eqref{eq:gQ-final}.  On an atlas of the active torus, define local representatives related by \eqref{eq:patch-data-final}, with transition functions generated by $g_Q(n)$.  Then the local solutions glue to a global T-fold representative of the same type-IIB charge orbit.  The local field equations are those of the geometric seed, and the global transition functions include exact T-duality.
\end{theorem}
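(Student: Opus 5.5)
The plan is to prove the theorem in three steps: local solvability, cocycle consistency of the gluing, and identification of the charge orbit.

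\textbf{Local solvability.} Since the seed is independent of the active coordinates $\sigma^i$ and $\widetilde\sigma_i$, the strong constraint holds trivially in every patch. The reduced action, built from \eqref{eq:vec-lag-final} together with the Einstein--Hilbert term, the doubled dilaton and the coset kinetic term $\tfrac18\Tr(\partial\cM\partial\cM^{-1})$, is invariant under constant $O(3,3)$. Concretely, the block form \eqref{eq:Hblock-final} transforms as \eqref{eq:Hpatch-final} under $G_{ab}$, $d$ is unchanged, and constant $O(d,d)$ transformations of $\mathscr H$ act as symmetries of the DFT action and of its field equations. Therefore each local representative $(g_{\mu\nu},\cA_\mu,\cM,d)_{(b)}$ obtained from the seed solves the same equations, namely those of the geometric seed.

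\textbf{Gluing and cocycle condition.} Next I will show that the atlas defines a consistent global object. I will cover the $\psi$-circle by two (or three) intervals. On one overlap the transition function is the identity, and across the seam it is $g_Q(n)$, as in \eqref{eq:local-monodromy-final}. Because all transition functions are powers of the single element $g_Q(n)$, and $g_Q(n)g_Q(m)=g_Q(n+m)$ (the matrices $\eps^{23}$ commute), the cocycle condition $g_{ab}g_{bc}=g_{ac}$ on triple overlaps holds automatically. The data therefore define a flat $O(3,3;\ZZ)$ local system over the circle with holonomy $g_Q(n)$.

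Integrality, established earlier for $n\in\ZZ$, guarantees that $g_Q(n)$ preserves the doubled lattice and hence the periodicities of $Y^M$. This makes the patching an exact string symmetry and not merely a supergravity one. The globally defined quantities are the external metric $g_{\mu\nu}$ and $d$, which are unchanged by \eqref{eq:patch-data-final}, and the stress tensor, by \eqref{eq:stress-patch-final}. Consequently the four-dimensional Einstein equations are patch independent.

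\textbf{Charge orbit.} Finally I will identify the glued background's charges. Using \eqref{eq:pq-transform-final} and the lift \eqref{eq:GQ-final}, the Dirac pairing is preserved by \eqref{eq:sympl-proof-final}, so Dirac quantisation is consistent across patches. The Narain charges are related by $h_Q(n)$ as in \eqref{eq:QP-transform-final}, with $\Delta$ preserved by \eqref{eq:Delta-preserved-final}. Hence the charges in any patch lie in the same $O(6,6;\ZZ)$ orbit as the seed charges \eqref{eq:Qseed-main}--\eqref{eq:Pseed-main}.

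\textbf{Expected obstacle.} The hardest step is making precise that the glued object is a genuine T-fold and not a geometric space in disguise, and that exactness holds beyond the NS--NS block. For the first point I would argue that $g_Q(n)$ with $n\neq0$ is a pure $\beta$-shift. A $\beta$-shift is not in the geometric subgroup $GL(3,\ZZ)\ltimes$ $B$-shifts, and conjugating by the full T-duality element on the active torus does not remove it. The transition functions therefore genuinely include exact T-duality. For the second point I would note that the type-IIB RR fields are spectators in this NS--NS orbit, and that an embedding of $g_Q$ in the spin cover of $O(3,3;\ZZ)$ acts on them consistently.
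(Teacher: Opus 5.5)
Your three steps follow the paper's proof almost line by line: constant-$O(3,3)$ covariance of the local equations, tensorial patching via \eqref{eq:Hpatch-final} and \eqref{eq:stress-patch-final}, invariance of $d$, a cocycle built from powers of one integral element, and integrality making the patching an exact duality. You also append the charge-orbit bookkeeping of Sections~\ref{sec:geometry} and~\ref{sec:qframe}. The genuine gap, which the paper's proof shares, is the gluing itself. The cocycle condition only says that the transition functions define a flat $O(3,3;\ZZ)$ local system over $S^1_\psi$. It does not say that your $\psi$-independent local representatives form a section of that system. Chain the overlap relations around the circle, with transitions $\Id,\Id,g_Q(n)$. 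Because each $\cM_{(a)}$ is constant along $\psi$, you get $\cM(x)=g_Q(n)^T\cM(x)\,g_Q(n)$ at every external point $x$. A positive-definite $\cM$ is preserved only by the compact group $O(\cM)\cong O(6)$, whose elements have bounded powers. But $g_Q(n)^k=g_Q(kn)$ is unbounded for $n\neq0$, so a parabolic element has no fixed point on $O(3,3)/(O(3)\times O(3))$. The vectors and charges fail the same test. Equation \eqref{eq:vQ-patched-final} shows that $v_Q$ is not invariant, and $h_Q(n)$ adds $-nN_{\rm F1}e_3$ to the seed charge $N_{\rm P}e_2+N_{\rm F1}\widetilde e^{\,2}$, whereas a $\psi$-independent flux integral must return to itself around the loop. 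To realise the monodromy, the fields must depend on $\psi$, e.g.\ $\cM(\psi)=U_Q(\psi)^T\cM_0\,U_Q(\psi)$ with $U_Q$ from \eqref{eq:local-beta-final}. Then $U_Q$ is not a constant $O(3,3)$ map, so your first step no longer shows that the local equations are those of the seed; the twist feeds $Q_1{}^{23}=n$ into them as a generalised Scherk--Schwarz term. Following the paper, which asserts after \eqref{eq:localQ-final} that patchwise-constant representatives suffice, does not close this. The argument needs a monodromy that fixes the running seed moduli and charges, and $g_Q(n)$ with $n\neq0$ never does.

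Separately, your argument that the result is a genuine T-fold rather than a geometric space in disguise is false. Conjugating $g_Q(n)$ by T-duality along $y,\chi$ (or by the full active T-duality $\eta$) gives $\begin{pmatrix}\Id_3&0\\ n\eps^{23}&\Id_3\end{pmatrix}$, which is a $B_{23}$-shift. The paper itself records $\beta^{23}\text{-shift}=T_2T_3\circ(B_{23}\text{-shift})\circ T_3T_2$ in \eqref{eq:beta-conjugate-final}. So the monodromy is $O(3,3;\ZZ)$-conjugate into the geometric subgroup, i.e.\ the would-be background is T-dual to an $H$-flux frame. The theorem only asserts that the transition functions in the chosen frame include an exact T-duality. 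That is immediate, since $g_Q(n)$ has a non-zero upper-right block for $n\neq0$. You should drop the frame-independent claim rather than try to prove it.
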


\begin{proof}
The local DFT equations are covariant under constant $O(3,3)$ transformations acting on active indices.  If the seed fields solve the local equations in one patch, the transformed fields solve the transformed equations in any other patch.  Equations \eqref{eq:stress-patch-final} and \eqref{eq:Hpatch-final} show that the reduced stress tensor and generalised metric patch tensorially.  The doubled dilaton is invariant.  On triple overlaps the transition functions are powers of a single integral element, so the cocycle condition is satisfied.  Since $g_Q(n)$ is an element of the exact string duality group for $n\in\ZZ$, the local representatives define a global T-fold background rather than an ordinary torus compactification.
\end{proof}

No scalar potential associated with a literal lower-dimensional gauging is assumed in this theorem.  The construction uses patchwise toroidal solutions glued by exact T-duality.  The local expression $Q_1{}^{23}=n$ is a representative of the monodromy and is not used as a local four-dimensional source term.
For the F1-P-NS5-KKM representative used here, the relevant charges and local fields lie in the NS-NS sector and in the doubled Kaluza-Klein/winding vector system.  If one moves to an RR duality frame, the corresponding RR polyform charges transform in the spinorial representation of $O(6,6;\ZZ)$.

A generalised Scherk-Schwarz reduction with non-zero algebraic flux can generate a gauged supergravity with a scalar potential.  The construction here is deliberately not based on assuming that such a potential has a Minkowski critical point.  Instead, the local black-hole fields are those of an ordinary toroidal type-IIB solution, and the non-geometric datum is implemented as an exact transition function between local representatives.  On every contractible patch the fields satisfy the ordinary local equations.  On an overlap the fields differ by an element of the exact duality group.  Thus the local differential equations do not acquire additional source terms from the transition function.

This is analogous to a flat bundle with non-trivial holonomy.  The connection may be represented locally in different gauges, but physical single-valued quantities are obtained by parallel transport in the bundle.  In the present case the ``parallel transport'' is the $O(3,3;\ZZ)$ action on the doubled vector bundle.  The external Einstein equation is well defined because its stress tensor is built from the invariant contraction \eqref{eq:stress-patch-final}.  No separate stress tensor is assigned to the patching matrix itself.
The same logic applies to the charge lattice.  The local charge components depend on the chosen patch, while the Dirac pairing and the $N=4$ invariant are independent of that choice.  This is why the T-fold representative can be globally non-geometric while having the same local horizon geometry as the seed.

\section{Four-Charge Invariant Entropy}
\label{sec:iib-fourcharge}

The four-dimensional STU model is a convenient truncation of type-IIB compactification on a six-torus.  Its prepotential may be written
\begin{equation}
  F=-\frac{X^1X^2X^3}{X^0},
  \label{eq:prepotential-final}
\end{equation}
with complex scalars $S=X^1/X^0$, $T=X^2/X^0$ and $U=X^3/X^0$.  The central charge is
\begin{equation}
  Z=e^{K/2}(q_\Lambda X^\Lambda-p^\Lambda F_\Lambda),
  \qquad
  K=-\log[-i(S-\bar S)(T-\bar T)(U-\bar U)].
  \label{eq:central-final}
\end{equation}
The black-hole potential is
\begin{equation}
  V_{\rm BH}(\varphi,\Gamma)=-\frac12\Gamma^T\cM_{\rm symp}(\varphi)\Gamma.
  \label{eq:VBH-final}
\end{equation}
For a BPS branch,
\begin{equation}
  V_{\rm BH}=|Z|^2+g^{i\bar j}D_iZD_{\bar j}\bar Z.
  \label{eq:VBH-BPS-final}
\end{equation}
The attractor equations are
\begin{equation}
  \partial_iV_{\rm BH}(\varphi,\Gamma)|_{\varphi=\varphi_*}=0.
  \label{eq:attractor-final}
\end{equation}
At a regular BPS point $D_iZ=0$, and
\begin{equation}
  V_{\rm BH}(\varphi_*,\Gamma)=|Z_*|^2=\sqrt{|I_4(\Gamma)|}.
  \label{eq:VBHI4-final}
\end{equation}
Since the STU invariant is the restriction of the $N=4$ invariant, the entropy may also be written
\begin{equation}
  S_{\rm BH}=\frac{\pi}{G_4}\sqrt{|\Delta(Q,P)|}.
  \label{eq:entropyDelta2-final}
\end{equation}
This is the preferred invariant form in the type-IIB construction.
A standard non-extremal four-charge solution in this orbit has Einstein-frame metric \cite{Cvetic:1995uj,Duff:1995sm,Behrndt:1997ny}
\begin{equation}
  ds_E^2=-\left(\prod_{A=1}^4H_A\right)^{-1/2}f\,dt^2
  +\left(\prod_{A=1}^4H_A\right)^{1/2}\left(f^{-1}dr^2+r^2d\Omega_2^2\right),
  \label{eq:STUmetric-final}
\end{equation}
where
\begin{equation}
  f(r)=1-\frac{\mu}{r},
  \qquad
  H_A(r)=1+\frac{\mu\sinh^2\delta_A}{r},
  \qquad A=1,\ldots,4.
  \label{eq:STUH-final}
\end{equation}
The four charges are encoded by the boost parameters $\delta_A$.  In invariant discussions the precise dimensionful relation between $\delta_A$ and integer charges is not needed; it is fixed by the periods and Newton constant as summarized in Appendix \ref{app:normalisations}.  The horizon is at $r=\mu$, and since $H_A(\mu)=\cosh^2\delta_A$, the area and entropy are
\begin{equation}
  A_H=4\pi\mu^2\prod_{A=1}^4\cosh\delta_A,
  \qquad
  S_{\rm nonext}=\frac{\pi\mu^2}{G_4}\prod_{A=1}^4\cosh\delta_A.
  \label{eq:nonextEntropy-final}
\end{equation}
The extremal limit is
\begin{equation}
  \mu\rightarrow0,
  \qquad
  \mu\sinh^2\delta_A\rightarrow q_A,
  \label{eq:extlimitSTU-final}
\end{equation}
so that
\begin{equation}
  H_A(r)\rightarrow1+\frac{q_A}{r}.
  \label{eq:Hext-final}
\end{equation}
Near $r=0$,
\begin{equation}
  \prod_AH_A\sim\frac{q_1q_2q_3q_4}{r^4}.
  \label{eq:prodH-final}
\end{equation}
The extremal metric therefore approaches
\begin{equation}
  ds_E^2\sim
  -\frac{r^2}{\sqrt{q_1q_2q_3q_4}}dt^2
  +\sqrt{q_1q_2q_3q_4}\left(\frac{dr^2}{r^2}+d\Omega_2^2\right),
  \label{eq:STUnh-final}
\end{equation}
which is locally $\AdS_2\times S^2$.  The common radius squared is the square root of the charge product, equivalently the square root of the invariant on the chosen four-charge representative.

For the F1-P-NS5-KKM seed the microscopic index at large charge gives \eqref{eq:microIIB-main}.  The equality with \eqref{eq:entropyDelta-main} is the standard four-charge entropy relation in the geometric frame \cite{Strominger:1996sh,Callan:1996dv,Maldacena:1996ix,Horowitz:1996fn,Johnson:1996ga,Maldacena:1996gb,Maldacena:1997de,Breckenridge:1996is,Cvetic:1995uj,Bena:2007kg}.  The T-fold representative inherits the equality by exact duality.
The near-horizon value of the scalar moduli is determined by the attractor equations.  In the $N=4$ language the charge matrix may be described by the pair $(Q,P)$, and the invariant potential can be written in a duality-covariant form whose extremum depends only on the orbit.  On the regular branch the area radius is
\begin{equation}
  R_H^2=\sqrt{|\Delta(Q,P)|}.
  \label{eq:RHDelta-main}
\end{equation}
The entropy is therefore $\pi R_H^2/G_4$.  When the orbit is represented in the STU sublattice, the same statement is the standard relation $R_H^2=\sqrt{|I_4|}$.  The attractor moduli may be changed by duality, but the extremised potential is invariant.
For the seed assignment \eqref{eq:STUassign-final}, an axion-free BPS representative has horizon moduli proportional to
\begin{equation}
  \Im S_*\propto \sqrt{\frac{N_{\rm P}N_{\rm F1}}{N_{\rm NS5}N_{\rm KKM}}},
  \quad
  \Im T_*\propto \sqrt{\frac{N_{\rm P}N_{\rm NS5}}{N_{\rm F1}N_{\rm KKM}}},
  \quad
  \Im U_*\propto \sqrt{\frac{N_{\rm P}N_{\rm KKM}}{N_{\rm F1}N_{\rm NS5}}}.
  \label{eq:attr-moduli-main}
\end{equation}
The proportionality constants depend on the convention used for the STU coordinates.  The important point is that the horizon values depend only on charges.  Since $h_Q$ maps charge orbits to charge orbits, the attractor mechanism is transported to the T-fold frame.

The non-extremal family \eqref{eq:STUmetric-final} is included to display the smooth approach to the extremal orbit.  The duality transformation acts on the charge parameters and internal fields; it does not change the local external line element after the representative is chosen.  The extremal entropy is therefore most cleanly expressed in the invariant form \eqref{eq:entropyDelta2-final}, which is independent of non-extremal parametrisation conventions.
The local ten-dimensional fields of the F1-P-NS5-KKM seed are given in Appendix \ref{app:local-seed}.  Their reduction gives the four-dimensional metric \eqref{eq:STUmetric-final}.  Because the seed is independent of the active torus coordinates, the patchwise duality transformation of Theorem \ref{thm:t-fold-final} may be applied without spoiling the local equations.

\section{\texorpdfstring{$Q$}{Q}-Frame Black-Hole Representative}
\label{sec:qframe}

The $Q$-frame representative is obtained by acting on the active torus with $g_Q(n)$ and on the full Narain charge lattice with the embedded transformation $h_Q(n)$.  The transformed charges are
\begin{equation}
  Q_Q=h_QQ,
  \qquad
  P_Q=h_QP.
  \label{eq:QPQ-final}
\end{equation}
Since $h_Q^TLh_Q=L$,
\begin{align}
  Q_Q^2&=Q^T h_Q^TLh_Q Q=Q^2,
  \label{eq:Q2pres-final}\\
  P_Q^2&=P^T h_Q^TLh_Q P=P^2,
  \label{eq:P2pres-final}\\
  Q_Q\cdot P_Q&=Q^Th_Q^TLh_QP=Q\cdot P.
  \label{eq:dotpres-final}
\end{align}
Therefore
\begin{equation}
  \Delta(Q_Q,P_Q)=\Delta(Q,P).
  \label{eq:DeltaQ-final}
\end{equation}
The entropy and microscopic index are functions of the charge orbit, and hence are unchanged.
The relation between the full Narain description and the STU representative may be displayed as the commutative diagram
\begin{equation}
\begin{array}{ccc}
(Q,P)\in\Gamma^{6,6}\oplus\Gamma^{6,6}
& \xrightarrow{\quad h_Q\quad} &
(Q_Q,P_Q)
\\[2mm]
\downarrow \pi_{\rm STU} && \downarrow \pi_{\rm STU}
\\[2mm]
\Gamma_{\rm STU}
& \xrightarrow{\quad \mathcal S_Q\quad} &
\Gamma_{{\rm STU},Q}
\end{array}
\label{eq:diagram-final}
\end{equation}
Here $\pi_{\rm STU}$ denotes restriction to the chosen four-charge representative, and $\mathcal S_Q$ is the induced symplectic action whenever the transformed vector is represented inside the same STU sublattice.  If a particular component representative leaves that sublattice, the invariant statement remains valid in the full $N=4$ charge lattice.  The equality of entropy is not the claim that the same eight STU components are unchanged; it is the claim that the full Narain charge orbit has the same quartic invariant.  The STU expression is a convenient representative of that invariant.

The local external metric of the transformed representative is the same as that of the geometric seed in the corresponding duality frame.  This does not mean that the monodromy is physically invisible.  The monodromy acts on compact fields and on the charge local system.  The external horizon area and local curvature are duality invariant, whereas the internal patching and the higher-dimensional interpretation of the gauge fields are non-geometric.
For the active electric/magnetic vector, the component action is precisely \eqref{eq:p-components-final}-\eqref{eq:q-components-final}.  These formulae show how a geometric charge representative is rewritten as a mixed Kaluza-Klein/winding representative.  A charge that is purely winding in one patch may become a mixture of winding and momentum in another patch.  Since the transformation is integral, the transformed charges remain in the charge lattice.

The monodromy integer $n$ and the black-hole charge vector play different roles.  The integer $n$ labels the T-fold compactification.  The charge vectors $Q_Q$ and $P_Q$, or equivalently the four-dimensional charge representative $\Gamma_Q$, label the black hole inside that compactification.  The horizon area is fixed by $\Delta(Q_Q,P_Q)$ and not by $n$ alone.  More elaborate constructions may include monodromy charges as entries of a larger duality vector, but that enlargement is not needed for the present T-fold representative.

The component formulae \eqref{eq:p-components-final}-\eqref{eq:q-components-final} are useful for seeing the local mixing of momentum and winding charges.  They should not be interpreted as the primary proof of entropy invariance.  Component formulae depend on the chosen basis and on the representative of the orbit.  The invariant proof is the $O(6,6)$ calculation \eqref{eq:Q2pres-final}-\eqref{eq:DeltaQ-final}.  If the active beta-shift sends the simple STU representative outside the displayed four-entry sublattice, one may either work in the full $N=4$ lattice or apply an additional integral duality to return to a convenient STU representative.  In both descriptions the value of $\Delta$ is unchanged.

This observation also clarifies the role of the word ``$Q$-frame''.  The frame is not a new local matter sector.  It is a choice of global duality patching for the compact fields and for the gauge-field local system.  The same physical charge orbit may be displayed in a geometric frame, a $Q$-patched T-fold frame, or an exotic-brane frame.  These are different representatives of one exact string-theory orbit.
It is useful to define the invariant charge scale
\begin{equation}
  \Theta^2=\sqrt{|\Delta(Q_Q,P_Q)|}.
  \label{eq:ThetaDelta-final}
\end{equation}
On the equal-charge EMD sublattice this will become the usual Reissner-Nordstrom charge scale.  In the full type-IIB orbit it is simply a compact notation for the square root of the duality invariant.

\section{EMD Reissner-Nordstrom Sublattice}
\label{sec:emd-rn}

The full type-IIB charge orbit is described by the Narain invariant and its STU restriction.  A two-charge EMD slice is nevertheless useful because it displays explicitly the running scalar and the equal-charge constant-scalar branch; the local solution belongs to the standard family of dilatonic black holes \cite{Gibbons:1987ps,Garfinkle:1990qj,Horowitz:1991cd,Sen:1992ua,Kallosh:1992ap,Kallosh:1992ii,Kallosh:1992eh}.  Along the one-scalar geodesic \eqref{eq:geodesic-final} and the primitive vector direction \eqref{eq:vQ-final}, the local action is
\begin{equation}
  S_{\rm EMD}=\frac1{16\pi G_4}\int d^4x\sqrt{-g}
  \left[R-2(\partial\phi)^2-e^{-2\phi}F_Q^2\right],
  \qquad F_Q=dA_Q.
  \label{eq:SEMD-final}
\end{equation}
The electric and magnetic charges are
\begin{equation}
  P_Q=\frac1{4\pi}\int_{S^2}F_Q,
  \qquad
  Q_Q=\frac1{4\pi}\int_{S^2}e^{-2\phi}\star F_Q.
  \label{eq:EMDcharges-final}
\end{equation}
The equations of motion are
\begin{align}
  R_{\mu\nu}&=2\partial_\mu\phi\partial_\nu\phi
  +2e^{-2\phi}\left(F_{\mu\rho}F_\nu{}^\rho-\frac14g_{\mu\nu}F^2\right),
  \label{eq:EMDEinstein-final}\\
  \nabla_\mu(e^{-2\phi}F^{\mu\nu})&=0,
  \label{eq:EMDMaxwell-final}\\
  \nabla^2\phi+\frac12e^{-2\phi}F^2&=0.
  \label{eq:EMDscalar-final}
\end{align}
For asymptotic scalar $\phi_\infty=0$, the static dyonic solution is
\begin{equation}
  ds_E^2=-\frac{\Delta}{R^2}dt^2+\frac{R^2}{\Delta}dr^2+R^2d\Omega_2^2,
  \label{eq:EMDmetric-final}
\end{equation}
where
\begin{equation}
  \Delta=(r-r_+)(r-r_-),
  \qquad
  R^2=r^2-\Sigma^2,
  \label{eq:DeltaR-final}
\end{equation}
with
\begin{equation}
  \Sigma=\frac{P_Q^2-Q_Q^2}{2M},
  \qquad
  r_\pm=M\pm\sqrt{M^2+\Sigma^2-P_Q^2-Q_Q^2}.
  \label{eq:SigmaRPM-final}
\end{equation}
The scalar and field strength are
\begin{equation}
  e^{2\phi}=\frac{r+\Sigma}{r-\Sigma},
  \label{eq:EMDscalarprof-final}
\end{equation}
and
\begin{equation}
  F_Q=\frac{Q_Qe^{2\phi}}{R^2}dt\wedge dr
  +P_Q\sin\theta\,d\theta\wedge d\varphi.
  \label{eq:EMDF-final}
\end{equation}
Substitution into \eqref{eq:EMDEinstein-final}-\eqref{eq:EMDscalar-final} reduces the differential equations to
\begin{equation}
  2M\Sigma=P_Q^2-Q_Q^2,
  \qquad
  r_++r_-=2M,
  \qquad
  r_+r_-=P_Q^2+Q_Q^2-\Sigma^2.
  \label{eq:EMDalgebra-final}
\end{equation}
These identities are equivalent to \eqref{eq:SigmaRPM-final}; the detailed substitution is in Appendix \ref{app:EMD-eom}.  Therefore \eqref{eq:EMDmetric-final}-\eqref{eq:EMDF-final} solve the local four-dimensional equations.
The equal-charge sublattice is defined by
\begin{equation}
  P_Q=Q_Q=\frac{\Theta}{\sqrt2}.
  \label{eq:equal-charge-final}
\end{equation}
Then
\begin{equation}
  \Sigma=0,
  \qquad
  R^2=r^2,
  \qquad
  e^{2\phi}=1,
  \label{eq:SigmaZero-final}
\end{equation}
and the metric becomes
\begin{equation}
  ds_E^2=-f(r)dt^2+f(r)^{-1}dr^2+r^2d\Omega_2^2,
  \qquad
  f(r)=1-\frac{2M}{r}+\frac{\Theta^2}{r^2}.
  \label{eq:RNmetric-final}
\end{equation}
This is the Reissner-Nordstrom-like branch.  It is not an independent ansatz and it is not sourced by an internal algebraic non-geometric flux component.  It is the constant-scalar equal electric/magnetic limit of the dyonic EMD slice represented in a non-geometric vector local system.
The Ricci tensor of \eqref{eq:RNmetric-final} is
\begin{align}
  R_{tt}&=f\left(\frac{f''}{2}+\frac{f'}{r}\right),
  &
  R_{rr}&=-\frac1f\left(\frac{f''}{2}+\frac{f'}{r}\right),
  \label{eq:RNricci-final}\\
  R_{\theta\theta}&=1-f-rf',
  &
  R_{\varphi\varphi}&=\sin^2\theta\,R_{\theta\theta}.
  \label{eq:RNricci2-final}
\end{align}
For $f=1-2M/r+\Theta^2/r^2$,
\begin{equation}
  R^t{}_t=R^r{}_r=-\frac{\Theta^2}{r^4},
  \qquad
  R^\theta{}_\theta=R^\varphi{}_\varphi=\frac{\Theta^2}{r^4},
  \qquad
  R=0.
  \label{eq:RNmixed-final}
\end{equation}
This is precisely the traceless stress-tensor structure of the equal electric/magnetic Maxwell sector.
The scalar charge is secondary hair.  Expanding \eqref{eq:EMDscalarprof-final} at large radius gives
\begin{equation}
  \phi(r)=\frac{\Sigma}{r}+O(r^{-2}),
  \label{eq:scalarhair-final}
\end{equation}
up to the sign convention for $\phi$.  The coefficient $\Sigma$ is fixed by the mass and electric/magnetic charges through \eqref{eq:EMDalgebra-final}.
The EMD stress tensor can be written explicitly as
\begin{equation}
  T_{\mu\nu}=2\partial_\mu\phi\partial_\nu\phi
  +2e^{-2\phi}\left(F_{\mu\rho}F_\nu{}^\rho-\frac14g_{\mu\nu}F^2\right).
  \label{eq:EMDstress-main}
\end{equation}
For unequal charges the scalar term contributes to the radial dependence, and the areal radius is $R^2=r^2-\Sigma^2$ rather than $r^2$.  In the equal-charge branch the scalar term vanishes, but the Maxwell term remains.  This is why the external geometry is Reissner-Nordstrom-like rather than Schwarzschild.  In the equal-charge branch, $F_Q^2=0$ while $F_{\mu\rho}F_\nu{}^\rho$ is non-zero.  Hence the scalar can be constant even though the Maxwell stress tensor supports a non-Schwarzschild metric.
The EMD action is also the simplest place to see how the non-geometric local system enters the four-dimensional equations.  In each patch $F_Q$ is an ordinary two-form.  Across an overlap the higher-dimensional origin of the one-vector direction changes according to \eqref{eq:vQ-patched-final}.  The four-dimensional equations do not see a discontinuity because the gauge kinetic matrix changes simultaneously.  Thus the EMD slice is a local truncation of a globally non-geometric vector bundle.

\section{Quantisation and Thermodynamics}
\label{sec:quant-thermo}

The construction has two independent integral structures.  First, the T-fold monodromy is integral,
\begin{equation}
  g_Q(n)\in O(3,3;\ZZ),
  \qquad n\in\ZZ.
  \label{eq:nquant-final}
\end{equation}
This is a global statement about the compactification.  Second, the black-hole charge vectors lie in the integral type-IIB charge lattice.  In the Narain description,
\begin{equation}
  Q,P\in\Gamma^{6,6},
  \qquad
  Q_Q,P_Q\in\Gamma^{6,6}.
  \label{eq:NarainQuant-final}
\end{equation}
In the EMD slice this reduces to
\begin{equation}
  P_Q=pP_0,
  \qquad
  Q_Q=eQ_0,
  \qquad
  p,e\in\ZZ,
  \label{eq:EMDquant-final}
\end{equation}
where $P_0$ and $Q_0$ are fixed charge units determined by the compactification.
The extremal condition of the EMD branch is
\begin{equation}
  r_+=r_-
  \quad\Longleftrightarrow\quad
  M^2+\Sigma^2=P_Q^2+Q_Q^2.
  \label{eq:extcond-final}
\end{equation}
For mutually BPS signs,
\begin{equation}
  M_{\rm ext}=\frac{|P_Q|+|Q_Q|}{\sqrt2},
  \qquad
  \Sigma_{\rm ext}=\frac{|P_Q|-|Q_Q|}{\sqrt2}.
  \label{eq:Mextext-final}
\end{equation}
The extremal horizon radius squared is
\begin{equation}
  R_H^2=M_{\rm ext}^2-\Sigma_{\rm ext}^2=2|P_QQ_Q|.
  \label{eq:RH2EMD-final}
\end{equation}
On the equal-charge sublattice,
\begin{equation}
  R_H=\Theta,
  \qquad
  M_{\rm ext}=\Theta.
  \label{eq:ThetaExt-final}
\end{equation}
Thus charge quantisation discretises the BPS mass at fixed asymptotic moduli.  It does not discretise the full classical non-extremal ADM parameter.  In the two-derivative effective theory the excess energy above the BPS bound is continuous.
For the general EMD black hole the outer horizon is at $r=r_+$, and the thermodynamic quantities are defined in the standard Bekenstein-Hawking framework \cite{Bekenstein:1973ur,Bardeen:1973gs,Hawking:1974sw,Wald:1993nt,Iyer:1994ys}.  The area, entropy and temperature are
\begin{equation}
  A_H=4\pi(r_+^2-\Sigma^2),
  \qquad
  S_{\rm BH}=\frac{\pi(r_+^2-\Sigma^2)}{G_4},
  \qquad
  T_H=\frac{r_+-r_-}{4\pi(r_+^2-\Sigma^2)}.
  \label{eq:EMDthermo-final}
\end{equation}
The first law is
\begin{equation}
  dM=T_HdS_{\rm BH}+\Phi_e\,dQ_Q+\Phi_m\,dP_Q-\Sigma\,d\phi_\infty.
  \label{eq:firstlaw-final}
\end{equation}
At fixed asymptotic scalar the final term is absent.  In the regular extremal branch,
\begin{equation}
  S_{\rm ext}=\frac{2\pi|P_QQ_Q|}{G_4}.
  \label{eq:extentropyEMD-final}
\end{equation}

For the equal-charge RN-like sublattice,
\begin{equation}
  r_H=M+\sqrt{M^2-\Theta^2},
  \qquad M\geq\Theta.
  \label{eq:rH-final}
\end{equation}
The temperature and entropy are
\begin{equation}
  T_H=\frac1{4\pi r_H}\left(1-\frac{\Theta^2}{r_H^2}\right),
  \qquad
  S_{\rm BH}=\frac{\pi r_H^2}{G_4}.
  \label{eq:RNthermo-final}
\end{equation}
Writing $M=(r_H^2+\Theta^2)/(2r_H)$, the fixed-$\Theta$ heat capacity is
\begin{equation}
  C_\Theta=\left(\frac{\partial M}{\partial T_H}\right)_\Theta
  =\frac{2\pi r_H^2}{G_4}\frac{r_H^2-\Theta^2}{3\Theta^2-r_H^2}.
  \label{eq:heat-final}
\end{equation}
Since $r_H\geq\Theta$, the numerator is non-negative.  Therefore
\begin{equation}
  C_\Theta>0\quad \text{for}\quad \Theta<r_H<\sqrt3\,\Theta,
  \qquad
  C_\Theta<0\quad \text{for}\quad r_H>\sqrt3\,\Theta.
  \label{eq:heat-sign-final}
\end{equation}
This is a local canonical statement.  No global thermodynamic stability is inferred without specifying the ensemble and the full Hessian in charge space. The equal-charge branch obeys the standard algebraic relation
\begin{equation}
  M=\frac{r_H^2+\Theta^2}{2r_H}.
  \label{eq:M-rH-main}
\end{equation}
The potential conjugate to the charge scale is
\begin{equation}
  \Phi_\Theta=\frac{\Theta}{r_H}.
  \label{eq:PhiTheta-main}
\end{equation}
Using \eqref{eq:RNthermo-final}, one obtains
\begin{equation}
  M=2T_HS_{\rm BH}+\Phi_\Theta\Theta,
  \label{eq:Smarr-main}
\end{equation}
which is the usual four-dimensional Smarr relation for the RN-like sublattice.  Taking the exterior derivative gives the fixed-frame first law
\begin{equation}
  dM=T_HdS_{\rm BH}+\Phi_\Theta d\Theta.
  \label{eq:firstlaw-RN-main}
\end{equation}
The generic EMD first law contains separate electric, magnetic and scalar-boundary terms as in \eqref{eq:firstlaw-final}.  The equal-charge branch is a special constant-scalar slice where those terms combine into the single charge-scale variation.
In the full type-IIB orbit the BPS bound is expressed through the central charge,
\begin{equation}
  M\geq |Z(Q_Q,P_Q;\mathcal M_\infty)|,
  \qquad
  M_{\rm BPS}=|Z(Q_Q,P_Q;\mathcal M_\infty)|.
  \label{eq:BPSbound-final}
\end{equation}
At the attractor point,
\begin{equation}
  |Z_*|^2=\sqrt{|\Delta(Q_Q,P_Q)|}.
  \label{eq:ZDelta-final}
\end{equation}
Thus the BPS mass lattice at fixed moduli is the image of the integral charge lattice under the central-charge map.

\section{Supersymmetry and Stability}
\label{sec:susy-stab}

The minimal EMD action is a bosonic slice.  Supersymmetry is most cleanly discussed in the type-IIB seed frame and then transported to the $Q$-frame by exact duality.  The F1-P-NS5-KKM system is a standard mutually BPS four-charge orbit for a compatible choice of orientations and signs.  The paper uses central-charge saturation and duality invariance of the index.  A representative projector computation in one conventional orientation is given in Appendix \ref{app:susy-projectors}.
The extremal T-fold representative is BPS because it is an exact duality representative of the type-IIB BPS seed.  If $S(g_Q)$ denotes the spinorial lift of the active duality transformation in the relevant representation, the transformed projector algebra is obtained by conjugation,
\begin{equation}
  \Pi_A^{(Q)}=S(g_Q)\Pi_AS(g_Q)^{-1}.
  \label{eq:projector-conj-final}
\end{equation}
An invertible conjugation does not change the rank of the common eigenspace.  Thus the number of preserved indexed supercharges is the same as in the seed frame.  In four-dimensional language this is the central-charge statement \eqref{eq:BPSbound-final} together with the attractor relation \eqref{eq:ZDelta-final}.

No local Killing-spinor equation containing $Q_1{}^{23}$ as if it were a radial source is used.  Such an equation would confuse a local representative of global monodromy with a local stress tensor.  The BPS statement is inherited from exact duality of the charge orbit.
We now record the conservative stability statement, following the standard Regge-Wheeler-Zerilli and charged-black-hole perturbation logic \cite{Regge:1957td,Zerilli:1970se,Moncrief:1974gw,Chandrasekhar:1975zza,Kodama:2003jz}.  A full perturbation analysis of the doubled T-fold background would require coupled fluctuations of the external metric, scalar matrix, doubled vectors, doubled dilaton and fermions.  We prove only the absence of tachyonic bound states for a minimally coupled neutral probe scalar in the exterior of the equal-charge branch.
Let
\begin{equation}
  \Phi_{\rm probe}(t,r,\theta,\varphi)=\frac{\psi_\ell(r)}{r}Y_{\ell m}(\theta,\varphi)e^{-i\omega t},
  \qquad
  \frac{dr_*}{dr}=\frac1{f(r)}.
  \label{eq:scalaransatz-final}
\end{equation}
The wave equation $\nabla^2\Phi_{\rm probe}=0$ becomes
\begin{equation}
  \frac{d^2\psi_\ell}{dr_*^2}+\bigl[\omega^2-V_\ell(r)\bigr]\psi_\ell=0,
  \label{eq:schrod-final}
\end{equation}
with
\begin{equation}
  V_\ell(r)=f(r)\left[\frac{\ell(\ell+1)}{r^2}+\frac{f'(r)}{r}\right].
  \label{eq:potential-general-final}
\end{equation}
For $f(r)=1-2M/r+\Theta^2/r^2$,
\begin{equation}
  V_\ell(r)=f(r)\left[\frac{\ell(\ell+1)}{r^2}+\frac{2M}{r^3}-\frac{2\Theta^2}{r^4}\right].
  \label{eq:potential-RN-final}
\end{equation}
For $r\geq r_H$, $f(r)\geq0$, and
\begin{align}
  Mr-\Theta^2&\geq Mr_H-\Theta^2
  =M\left(M+\sqrt{M^2-\Theta^2}\right)-\Theta^2 \notag\\
  &=(M^2-\Theta^2)+M\sqrt{M^2-\Theta^2}\geq0.
  \label{eq:positivity-final}
\end{align}
Thus the $s$-wave bracket in \eqref{eq:potential-RN-final} is non-negative, and higher partial waves add a positive centrifugal term.  The exterior operator is non-negative with the usual black-hole boundary conditions.  There are no neutral probe-scalar modes with $\omega^2<0$ in this sector.  This statement is not a claim of full coupled DFT stability.
The asymptotic potential behaves as
\begin{equation}
  V_\ell(r)=\frac{\ell(\ell+1)}{r^2}+\frac{2M}{r^3}+O(r^{-4}),
  \label{eq:tail-final}
\end{equation}
so a neutral probe has the same leading long-range tail as in the ordinary Reissner-Nordstrom problem.
The positivity argument is insensitive to the non-geometric patching because the probe is neutral under the doubled vector bundle.  A probe carrying momentum or winding charge would couple to the local system and would see patch-dependent gauge potentials.  Such a problem is more subtle: the mode functions must be sections of the duality bundle rather than ordinary scalar functions.  This is why the statement made here is deliberately restricted to neutral probes.  Charged probes or doubled-vector fluctuations require a separate coupled analysis.

\section{Near-Horizon Duality Map}
\label{sec:nh-micro}

On the extremal EMD branch,
\begin{equation}
  r_+=r_-=M,
  \qquad
  R_H^2=M^2-\Sigma^2=2|P_QQ_Q|.
  \label{eq:RH-EMD-final}
\end{equation}
For positive charges, set
\begin{equation}
  r=M+\lambda\rho,
  \qquad
  t=\frac{R_H^2}{\lambda}\tau,
  \qquad
  \lambda\rightarrow0.
  \label{eq:NHscaling-final}
\end{equation}
The four-dimensional metric tends to
\begin{equation}
  ds_4^2\rightarrow R_H^2\left[-\rho^2d\tau^2+\frac{d\rho^2}{\rho^2}+d\Omega_2^2\right].
  \label{eq:NHmetric-final}
\end{equation}
The horizon scalar is
\begin{equation}
  e^{2\phi_H}=\left|\frac{P_Q}{Q_Q}\right|,
  \qquad
  e^{-2\phi_H}=\left|\frac{Q_Q}{P_Q}\right|.
  \label{eq:horizon-scalar-final}
\end{equation}
Including compact directions, the local near-horizon geometry is
\begin{equation}
  \AdS_2(R_H)\times S^2(R_H)\times T_Q^3\times T_s^3.
  \label{eq:NHlocal-final}
\end{equation}
The active $T_Q^3$ is globally patched by the monodromy \eqref{eq:local-monodromy-final}, so the near-horizon compactification is a T-fold.  This local $\AdS_2$ factor is the same geometric ingredient that appears in the entropy-function and near-horizon holography literature \cite{Strominger:1998yg,Maldacena:1997re,Sen:2008yk}.
The entropy-function calculation gives an independent derivation of the EMD entropy \cite{Sen:2005wa,Sen:2008yk}.  Use the near-horizon ansatz
\begin{equation}
  ds^2=v_1\left(-\rho^2dt^2+\frac{d\rho^2}{\rho^2}\right)+v_2d\Omega_2^2,
  \label{eq:EFmetric-final}
\end{equation}
\begin{equation}
  F_Q=\mathfrak e\,dt\wedge d\rho+P_Q\sin\theta\,d\theta\wedge d\varphi,
  \qquad
  \phi=u.
  \label{eq:EFfields-final}
\end{equation}
The Ricci scalar and field strength are
\begin{equation}
  R=-\frac2{v_1}+\frac2{v_2},
  \qquad
  F_Q^2=2\left(-\frac{\mathfrak e^2}{v_1^2}+\frac{P_Q^2}{v_2^2}\right).
  \label{eq:EFRF-final}
\end{equation}
The angular integral of the Lagrangian is
\begin{equation}
  f(v_1,v_2,u,\mathfrak e,P_Q)=\frac1{4G_4}v_1v_2\left[-\frac2{v_1}+\frac2{v_2}
  -2e^{-2u}\left(-\frac{\mathfrak e^2}{v_1^2}+\frac{P_Q^2}{v_2^2}\right)\right].
  \label{eq:entropy-f-final}
\end{equation}
The conjugate electric charge is
\begin{equation}
  q_{\rm Sen}=\frac{\partial f}{\partial\mathfrak e}=\frac1{G_4}\frac{v_2}{v_1}e^{-2u}\mathfrak e.
  \label{eq:qSen-final}
\end{equation}
With the convention \eqref{eq:EMDcharges-final}, $Q_Q=G_4q_{\rm Sen}$.  Sen's entropy function is
\begin{equation}
  \cE=2\pi(q_{\rm Sen}\mathfrak e-f).
  \label{eq:entropy-function-final}
\end{equation}
Extremisation gives
\begin{equation}
  \frac{\mathfrak e^2}{v_1^2}=\frac{P_Q^2}{v_2^2},
  \qquad
  e^{-2u}=\left|\frac{Q_Q}{P_Q}\right|,
  \qquad
  v_1=v_2=2|P_QQ_Q|.
  \label{eq:EFextremum-final}
\end{equation}
The extremised value is
\begin{equation}
  S_{\rm Wald}=\cE_{\rm ext}=\frac{\pi v_2}{G_4}=\frac{2\pi|P_QQ_Q|}{G_4}.
  \label{eq:SWald-final}
\end{equation}
On the equal-charge sublattice this reduces to $\pi\Theta^2/G_4$.  In the full type-IIB orbit the entropy-function statement is
\begin{equation}
  S_{\rm Wald}=\frac{\pi}{G_4}\sqrt{|\Delta(Q_Q,P_Q)|}.
  \label{eq:SWaldDelta-final}
\end{equation}

The microscopic counting is performed in the geometric type-IIB F1-P-NS5-KKM frame, where the weak-coupling index and related brane-counting arguments are under control \cite{Strominger:1996sh,Callan:1996dv,Maldacena:1996ix,Horowitz:1996fn,Johnson:1996ga,Maldacena:1996gb,Maldacena:1997de,Mathur:2005zp,Bena:2007kg}.  The large-charge index gives \eqref{eq:microIIB-main}.  Applying the integral duality transformation gives a charge vector in the $Q$-frame, but exact duality preserves the index:
The equality of the index can be expressed in orbit language.  Let $\mathcal O(Q,P)$ denote the orbit of the charge pair under the exact integral duality group.  The indexed degeneracy is constant on $\mathcal O(Q,P)$ for the BPS sector under consideration.  Since $(Q_Q,P_Q)$ lies in the same orbit as $(Q,P)$, the logarithm of the degeneracy is the same.  The saddle-point approximation at large charge then gives the same square-root growth as in the geometric frame.

\begin{equation}
  d(Q_Q,P_Q)=d(Q,P),
  \qquad
  \Delta(Q_Q,P_Q)=\Delta(Q,P).
  \label{eq:index-pres-final}
\end{equation}
Therefore the microscopic entropy of the T-fold representative equals its macroscopic entropy.
It is important to separate three frames.  The counting frame is the F1-P-NS5-KKM frame, where the BPS index is computed.  The T-fold $Q$-frame is obtained by applying $h_Q$ to the charge local system.  An exotic display frame is a further duality-related representative in which some five-brane or monopole charges appear as monodromy charges.  A beta-transform may be represented as a T-duality conjugate of a large $B$-field gauge transformation,
\begin{equation}
  \beta^{23}\text{-shift}=T_2T_3\circ(B_{23}\text{-shift})\circ T_3T_2.
  \label{eq:beta-conjugate-final}
\end{equation}
The counting-frame brane orientation need not itself be the orientation in which an NS5 is immediately dualised to a $5^2_2$ brane.  The exotic display frame is first reached by ordinary T-dualities to a representative with two transverse isometries along the beta directions; only then does the conjugation by T-dualities produce the $5^2_2$-type monodromy representative.  This exotic-frame description is an interpretation of the same charge orbit and uses the standard T-duality relation between geometric five-branes, monopoles and exotic monodromy branes \cite{Obers:1998fb,Tong:2002rq,Harvey:2005ab,deBoer:2012ma,Kimura:2014wsa}.  No independent exotic-brane Cardy calculation is claimed.

\section{Conclusion}
\label{sec:conclusion}

A T-fold representative of a four-dimensional dyonic black-hole charge orbit has been constructed in doubled type-IIB theory.  The central point is the separation between the global non-geometric monodromy of the active compactification and the conserved four-dimensional charges that source the external black-hole geometry.  The monodromy determines the duality patching of the active doubled torus and of the charge local system.  The external Reissner-Nordstrom-like behaviour is produced by ordinary four-dimensional electric and magnetic field strengths in that local system, not by treating a purely internal algebraic flux component as an external stress tensor.

The geometric starting point is the type-IIB F1-P-NS5-KKM toroidal black-hole orbit.  The active parabolic T-duality transformation gives a non-geometric representative of the same orbit.  The local field equations, attractor mechanism, entropy and indexed microscopic degeneracy are unchanged because the construction uses exact integral duality.  The full entropy is controlled by the $O(6,6)$ invariant of the NS-NS/$N=4$ type-IIB charge sublattice, while the STU expression is a convenient representative on a four-charge sublattice.

A minimal Einstein-Maxwell-dilaton slice was used to display the local black-hole fields explicitly.  The generic two-charge branch carries a running scalar.  The equal-charge branch has constant scalar and gives the Reissner-Nordstrom-like metric that preserves the original black-hole interpretation.  The extremal near-horizon geometry is locally the standard anti-de Sitter two-space and two-sphere times compact torus factors, while the active compact factor is globally patched by T-duality.

The microscopic count is inherited from the type-IIB geometric brane frame.  The exotic-brane language is a duality-frame description of the same charge orbit rather than an independent counting assumption.  Natural extensions include the full coupled perturbation problem in the doubled theory, higher-derivative Wald entropy in the T-fold frame, the boundary dynamics of the doubled near-horizon compactification, and charge-lattice constraints related to swampland bounds \cite{Wald:1993nt,Iyer:1994ys,Strominger:1998yg,ArkaniHamed:2006dz,Ooguri:2006in}.
\section*{Conflict of Interest}
The authors declare that there are no conflicts of interest related to the content, authorship, or publication of this manuscript.
 \section*{Funding Statement }
 No external funding or institutional support was involved in the preparation of this work.
\section*{Data Availability statements }
No data is associated with this manuscript.
 \bibliographystyle{ytphys}
\bibliography{references}

@article{Hull:2004in,
  author        = {Hull, C. M.},
  title         = {A geometry for non-geometric string backgrounds},
  journal       = {JHEP},
  volume        = {10},
  pages         = {065},
  year          = {2005},
  doi           = {10.1088/1126-6708/2005/10/065}
}

@article{Dabholkar:2005ve,
  author        = {Dabholkar, A. and Hull, C.},
  title         = {Generalised {T}-duality and non-geometric backgrounds},
  journal       = {JHEP},
  volume        = {05},
  pages         = {009},
  year          = {2006},
  doi           = {10.1088/1126-6708/2006/05/009}
}

@article{Hohm:2010pp,
  author        = {Hohm, O. and Hull, C. and Zwiebach, B.},
  title         = {Background independent action for double field theory},
  journal       = {JHEP},
  volume        = {07},
  pages         = {016},
  year          = {2010},
  doi           = {10.1007/JHEP07(2010)016} 
}

@article{Hohm:2010jy,
  author        = {Hohm, O. and Hull, C. and Zwiebach, B.},
  title         = {Generalized metric formulation of double field theory},
  journal       = {JHEP},
  volume        = {08},
  pages         = {008},
  year          = {2010},
  doi           = {10.1007/JHEP08(2010)008}
}

@article{Aldazabal:2013sca,
  author        = {Aldazabal, G. and Marques, D. and Nunez, C.},
  title         = {Double field theory: a pedagogical review},
  journal       = {Class. Quant. Grav.},
  volume        = {30},
  pages         = {163001},
  year          = {2013},
  doi           = {10.1088/0264-9381/30/16/163001}
}

@article{Shelton:2005cf,
  author        = {Shelton, J. and Taylor, W. and Wecht, B.},
  title         = {Nongeometric flux compactifications},
  journal       = {JHEP},
  volume        = {10},
  pages         = {085},
  year          = {2005},
  doi           = {10.1088/1126-6708/2005/10/085}
}

@article{Andriot:2011uh,
  author        = {Andriot, D. and Larfors, M. and Lust, D. and Patalong, P.},
  title         = {A ten-dimensional action for non-geometric fluxes},
  journal       = {JHEP},
  volume        = {09},
  pages         = {134},
  year          = {2011},
  doi           = {10.1007/JHEP09(2011)134} 
}

@article{Plauschinn:2018wbo,
  author        = {Plauschinn, E.},
  title         = {Non-geometric backgrounds in string theory},
  journal       = {Phys. Rept.},
  volume        = {798},
  pages         = {1-122},
  year          = {2019},
  doi           = {10.1016/j.physrep.2018.12.002}
}

@article{deBoer:2012ma,
  author        = {de Boer, J. and Shigemori, M.},
  title         = {Exotic branes in string theory},
  journal       = {Phys. Rept.},
  volume        = {532},
  pages         = {65-118},
  year          = {2013},
  doi           = {10.1016/j.physrep.2013.07.003} 
}

@article{Kimura:2014wsa,
  author        = {Kimura, T. and Sasaki, S. and Yata, M.},
  title         = {World-volume effective actions of exotic five-branes},
  journal       = {JHEP},
  volume        = {07},
  pages         = {127},
  year          = {2014},
  doi           = {10.1007/JHEP07(2014)127} 
}

@article{Gibbons:1987ps,
  author  = {Gibbons, G. W. and Maeda, K.},
  title   = {Black holes and membranes in higher dimensional theories with dilaton fields},
  journal = {Nucl. Phys. B},
  volume  = {298},
  pages   = {741-775},
  year    = {1988},
  doi     = {10.1016/0550-3213(88)90006-5}
}

@article{Kallosh:1992ii,
  author        = {Kallosh, R. and Peet, A. W.},
  title         = {Dilaton black holes near the horizon},
  journal       = {Phys. Rev. D},
  volume        = {46},
  pages         = {R5223-R5227},
  year          = {1992},
  doi           = {10.1103/PhysRevD.46.R5223}
}

@article{Kallosh:1992eh,
  author        = {Kallosh, R. and Ortin, T. and Peet, A. W.},
  title         = {Entropy and action of dilaton black holes},
  journal       = {Phys. Rev. D},
  volume        = {47},
  pages         = {5400-5407},
  year          = {1993},
  doi           = {10.1103/PhysRevD.47.5400} 
}

@article{Ferrara:1995ih,
  author        = {Ferrara, S. and Kallosh, R. and Strominger, A.},
  title         = {{N=2} extremal black holes},
  journal       = {Phys. Rev. D},
  volume        = {52},
  pages         = {R5412-R5416},
  year          = {1995},
  doi           = {10.1103/PhysRevD.52.R5412} 
}

@article{Cvetic:1995uj, 
  title = {Dyonic {BPS} saturated black holes of heterotic string on a six-torus},
  author = {Cveti\ifmmode \check{c}\else \v{c}\fi{}, Mirjam and Youm, Donam},
  journal = {Phys. Rev. D},
  volume = {53},
  issue = {2},
  pages = {R584(R)-R588(R)},
  numpages = {0},
  year = {1996},
  doi = {10.1103/PhysRevD.53.R584}
}

@article{Strominger:1996sh,
  author        = {Strominger, A. and Vafa, C.},
  title         = {Microscopic origin of the Bekenstein-Hawking entropy},
  journal       = {Phys. Lett. B},
  volume        = {379},
  pages         = {99-104},
  year          = {1996},
  doi           = {10.1016/0370-2693(96)00345-0} 
}

@article{Johnson:1996ga,
  author        = {Johnson, C. V. and Khuri, R. R. and Myers, R. C.},
  title         = {Entropy of {4D} extremal black holes},
  journal       = {Phys. Lett. B},
  volume        = {378},
  pages         = {78-86},
  year          = {1996},
  doi           = {10.1016/0370-2693(96)00383-8} 
}

@article{Maldacena:1996gb,
  author        = {Maldacena, J. M. and Strominger, A.},
  title         = {Statistical entropy of four-dimensional extremal black holes},
  journal       = {Phys. Rev. Lett.},
  volume        = {77},
  pages         = {428-429},
  year          = {1996},
  doi           = {10.1103/PhysRevLett.77.428}
}

@article{Sen:2005wa,
  author        = {Sen, A.},
  title         = {Black hole entropy function and the attractor mechanism in higher derivative gravity},
  journal       = {JHEP},
  volume        = {09},
  pages         = {038},
  year          = {2005},
  doi           = {10.1088/1126-6708/2005/09/038} 
}

@article{Sen:2008yk,
  author        = {Sen, A.},
  title         = {Entropy function and {AdS2/CFT1} correspondence},
  journal       = {JHEP},
  volume        = {11},
  pages         = {075},
  year          = {2008},
  doi           = {10.1088/1126-6708/2008/11/075}
}

@article{Arvanitakis:2016xjz,
  author        = {Arvanitakis, A. S. and Blair, C. D. A.},
  title         = {Black hole thermodynamics, stringy dualities and double field theory},
  journal       = {Class. Quant. Grav.},
  volume        = {34},
  pages         = {055001},
  year          = {2017},
  doi           = {10.1088/1361-6382/aa5a59} 
}

@incollection{Bena:2007kg,
  author        = {Bena, I. and Warner, N. P.},
  title         = {Black holes, black rings and their microstates},
  booktitle     = {Lecture Notes in Physics},
  volume        = {755},
  pages         = {1-92},
  year          = {2008},
  publisher     = {Springer},
  doi           = {10.1007/978-3-540-79523-0_1} 
}

@article{Maharana:1992my,
  author        = {Maharana, J. and Schwarz, J. H.},
  title         = {Noncompact symmetries in string theory},
  journal       = {Nucl. Phys. B},
  volume        = {390},
  pages         = {3-32},
  year          = {1993},
  doi           = {10.1016/0550-3213(93)90387-5}
}

@article{Giveon:1994fu,
  author        = {Giveon, A. and Porrati, M. and Rabinovici, E.},
  title         = {Target space duality in string theory},
  journal       = {Phys. Rept.},
  volume        = {244},
  pages         = {77-202},
  year          = {1994},
  doi           = {10.1016/0370-1573(94)90070-1}
}

@article{Kikkawa:1984cp,
  author = "Kikkawa, Keiji and Yamasaki, Masami",
    title = "{Casimir Effects in Superstring Theories}",
    reportNumber = "OU-HET 61",
    doi = "10.1016/0370-2693(84)90423-4",
    journal = "Phys. Lett. B",
    volume = "149",
    pages = "357-360",
    year = "1984"
}

@article{Sakai:1986vg,
  author        = {Sakai, N. and Senda, I.},
  title         = {Vacuum Energies of String Compactified on Torus},
  journal       = {Prog. Theor. Phys.},
  volume        = {75},
  pages         = {692},
  year          = {1986},
  doi           = {10.1143/PTP.75.692}
}

@article{Narain:1985jj,
  author        = {Narain, K. S.},
  title         = {New Heterotic String Theories in Uncompactified Dimensions Less Than Ten},
  journal       = {Phys. Lett. B},
  volume        = {169},
  pages         = {41-46},
  year          = {1986},
  doi           = {10.1016/0370-2693(86)90682-9}
}

@article{Narain:1986am,
  author        = {Narain, K. S. and Sarmadi, M. H. and Witten, E.},
  title         = {A Note on Toroidal Compactification of Heterotic String Theory},
  journal       = {Nucl. Phys. B},
  volume        = {279},
  pages         = {369-379},
  year          = {1987},
  doi           = {10.1016/0550-3213(87)90001-0}
}

@article{Buscher:1987sk,
  author        = {Buscher, T. H.},
  title         = {A Symmetry of the String Background Field Equations},
  journal       = {Phys. Lett. B},
  volume        = {194},
  pages         = {59-62},
  year          = {1987},
  doi           = {10.1016/0370-2693(87)90769-6}
}

@article{Buscher:1987qj,
  author        = {Buscher, T. H.},
  title         = {Path Integral Derivation of Quantum Duality in Nonlinear Sigma Models},
  journal       = {Phys. Lett. B},
  volume        = {201},
  pages         = {466-472},
  year          = {1988},
  doi           = {10.1016/0370-2693(88)90602-8}
}

@article{Meissner:1991zj,
  author        = {Meissner, K. A. and Veneziano, G.},
  title         = {Symmetries of Cosmological Superstring Vacua},
  journal       = {Phys. Lett. B},
  volume        = {267},
  pages         = {33-36},
  year          = {1991},
  doi           = {10.1016/0370-2693(91)90520-Z}
}

@article{Tseytlin:1990nb,
  author        = {Tseytlin, A. A.},
  title         = {Duality Symmetric Closed String Theory and Interacting Chiral Scalars},
  journal       = {Nucl. Phys. B},
  volume        = {350},
  pages         = {395-440},
  year          = {1991},
  doi           = {10.1016/0550-3213(91)90266-Z}
}

@article{Tseytlin:1990va,
  author        = {Tseytlin, A. A.},
  title         = {Duality Symmetric String Theory and the Cosmological Constant Problem},
  journal       = {Phys. Rev. Lett.},
  volume        = {66},
  pages         = {545-548},
  year          = {1991},
  doi           = {10.1103/PhysRevLett.66.545}
}

@article{Siegel:1993xq,
  author        = {Siegel, W.},
  title         = {Two Vierbein Formalism for String Inspired Axionic Gravity},
  journal       = {Phys. Rev. D},
  volume        = {47},
  pages         = {5453-5459},
  year          = {1993},
  doi           = {10.1103/PhysRevD.47.5453} 
}

@article{Siegel:1993th,
  author        = {Siegel, W.},
  title         = {Superspace Duality in Low-Energy Superstrings},
  journal       = {Phys. Rev. D},
  volume        = {48},
  pages         = {2826-2837},
  year          = {1993},
  doi           = {10.1103/PhysRevD.48.2826} 
}

@article{Courant:1990,
  author        = {Courant, T. J.},
  title         = {Dirac Manifolds},
  journal       = {Trans. Amer. Math. Soc.},
  volume        = {319},
  pages         = {631-661},
  year          = {1990},
  doi           = {10.1090/S0002-9947-1990-0998124-1}
}

@article{Hitchin:2002,
  author        = {Hitchin, N.},
  title         = {Generalized Calabi-Yau Manifolds},
  journal       = {Quart. J. Math.},
  volume        = {54},
  number        = {3},
  pages         = {281-308},
  year          = {2003},
  doi           = {10.1093/qjmath/54.3.281} 
}

@article{Gualtieri:2011dx,
  author        = {Gualtieri, M.},
  title         = {Generalized Complex Geometry},
  journal       = {Ann. Math.},
  volume        = {174},
  number        = {1},
  pages         = {75-123},
  year          = {2011},
  doi           = {10.4007/annals.2011.174.1.3} 
}

@article{Hull:2009mi,
  author        = {Hull, C. and Zwiebach, B.},
  title         = {Double Field Theory},
  journal       = {JHEP},
  volume        = {09},
  pages         = {099},
  year          = {2009},
  doi           = {10.1088/1126-6708/2009/09/099} 
}

@article{Hohm:2010xq,
  author        = {Hohm, O. and Kwak, S. K.},
  title         = {Frame-like Geometry of Double Field Theory},
  journal       = {J. Phys. A},
  volume        = {44},
  pages         = {085404},
  year          = {2011},
  doi           = {10.1088/1751-8113/44/8/085404}
}

@article{Hohm:2011si,
  author        = {Hohm, O. and Zwiebach, B.},
  title         = {On the Riemann Tensor in Double Field Theory},
  journal       = {JHEP},
  volume        = {05},
  pages         = {126},
  year          = {2012},
  doi           = {10.1007/JHEP05(2012)126}
}

@article{Geissbuhler:2011mx,
  author        = {Geissbuhler, D.},
  title         = {Double Field Theory and $\mathcal{N}=4$ Gauged Supergravity},
  journal       = {JHEP},
  volume        = {11},
  pages         = {116},
  year          = {2011},
  doi           = {10.1007/JHEP11(2011)116}
}

@article{Grana:2012rr,
  author        = {Grana, M. and Marques, D.},
  title         = {Gauged Double Field Theory},
  journal       = {JHEP},
  volume        = {04},
  pages         = {020},
  year          = {2012},
  doi           = {10.1007/JHEP04(2012)020} 
}

@article{Berman:2011pe,
  author        = {Berman, D. S. and Perry, M. J.},
  title         = {Generalized Geometry and ${M}$ Theory},
  journal       = {JHEP},
  volume        = {06},
  pages         = {074},
  year          = {2011},
  doi           = {10.1007/JHEP06(2011)074} 
}

@article{Hohm:2013pua,
  author        = {Hohm, O. and Samtleben, H.},
  title         = {Exceptional Field Theory {I}: E6(6) Covariant Form of {M}-Theory and Type {IIB}},
  journal       = {Phys. Rev. D},
  volume        = {89},
  pages         = {066016},
  year          = {2014},
  doi           = {10.1103/PhysRevD.89.066016}
}

@article{Grana:2005jc,
  author        = {Grana, M.},
  title         = {Flux Compactifications in String Theory: A Comprehensive Review},
  journal       = {Phys. Rept.},
  volume        = {423},
  pages         = {91-158},
  year          = {2006},
  doi           = {10.1016/j.physrep.2005.10.008}
}

@article{Douglas:2006es,
  author        = {Douglas, M. R. and Kachru, S.},
  title         = {Flux Compactification},
  journal       = {Rev. Mod. Phys.},
  volume        = {79},
  pages         = {733-796},
  year          = {2007},
  doi           = {10.1103/RevModPhys.79.733}
}

@article{Kachru:2002he,
  author        = {Kachru, S. and Schulz, M. B. and Trivedi, S.},
  title         = {Moduli Stabilization from Fluxes in a Simple {IIB} Orientifold},
  journal       = {JHEP},
  volume        = {10},
  pages         = {007},
  year          = {2003},
  doi           = {10.1088/1126-6708/2003/10/007} 
}

@article{Hellerman:2002ax,
  author        = {Hellerman, S. and McGreevy, J. and Williams, B.},
  title         = {Geometric Constructions of Nongeometric String Theories},
  journal       = {JHEP},
  volume        = {01},
  pages         = {024},
  year          = {2004},
  doi           = {10.1088/1126-6708/2004/01/024} 
}

@article{Flournoy:2004vn,
  title = {Constructing nongeometric vacua in string theory},
journal = {Nuclear Physics B},
volume = {706},
number = {1},
pages = {127-149},
year = {2005},
issn = {0550-3213},
doi = {https://doi.org/10.1016/j.nuclphysb.2004.11.005},
author = {Alex Flournoy and Brian Wecht and Brook Williams}
}

@article{Hull:2009sg,
 author = "Hull, C. M. and Reid-Edwards, R. A.",
    title = "{Flux compactifications of string theory on twisted tori}", 
    doi = "10.1002/prop.200900076",
    journal = "Fortsch. Phys.",
    volume = "57",
    pages = "862-894",
    year = "2009"
}

@article{DallAgata:2005zlf,
 author = "D'Auria, Riccardo and Ferrara, Sergio and Vaula, Silvia",
    title = "{$\mathcal{N}=4$ gauged supergravity and a {IIB} orientifold with fluxes}", 
    doi = "10.1088/1367-2630/4/1/371",
    journal = "New J. Phys.",
    volume = "4",
    pages = "71",
    year = "2002"
}

@article{Aldazabal:2006up,
  author        = {Aldazabal, G. and Font, A. and Marques, D. and Nunez, C.},
  title         = {More Dual Fluxes and Moduli Fixing},
  journal       = {JHEP},
  volume        = {05},
  pages         = {070},
  year          = {2006},
  doi           = {10.1088/1126-6708/2006/05/070} 
}

@article{Blumenhagen:2011ph,
  author        = {Blumenhagen, R. and Deser, A. and Lust, D. and Plauschinn, E. and Rennecke, F.},
  title         = {Non-Geometric Fluxes, Asymmetric Strings and Nonassociative Geometry},
  journal       = {J. Phys. A},
  volume        = {44},
  pages         = {385401},
  year          = {2011},
  doi           = {10.1088/1751-8113/44/38/385401} 
}

@article{Hassler:2014sba,
  author        = {Hassler, F. and Lust, D.},
  title         = {Consistent Compactification of Double Field Theory on Non-Geometric Flux Backgrounds},
  journal       = {JHEP},
  volume        = {05},
  pages         = {085},
  year          = {2014},
  doi           = {10.1007/JHEP05(2014)085} 
}

@article{Obers:1998fb,
  author        = {Obers, N. A. and Pioline, B.},
  title         = {{U}-Duality and {M}-Theory},
  journal       = {Phys. Rept.},
  volume        = {318},
  pages         = {113-225},
  year          = {1999},
  doi           = {10.1016/S0370-1573(99)00004-6} 
}

@article{Tong:2002rq,
 author = "Tong, David",
    title = "{NS5-branes, T duality and world sheet instantons}", 
    doi = "10.1088/1126-6708/2002/07/013",
    journal = "JHEP",
    volume = "07",
    pages = "013",
    year = "2002"
}

@article{Harvey:2005ab,
  author        = {Harvey, J. A. and Jensen, S.},
  title         = {Worldsheet Instanton Corrections to the Kaluza-Klein Monopole},
  journal       = {JHEP},
  volume        = {10},
  pages         = {028},
  year          = {2005},
  doi           = {10.1088/1126-6708/2005/10/028} 
}

@article{Garfinkle:1990qj,
  author        = {Garfinkle, D. and Horowitz, G. T. and Strominger, A.},
  title         = {Charged Black Holes in String Theory},
  journal       = {Phys. Rev. D},
  volume        = {43},
  pages         = {3140},
  year          = {1991},
  doi           = {10.1103/PhysRevD.43.3140}
}

@article{Horowitz:1991cd,
  author        = {Horowitz, G. T. and Strominger, A.},
  title         = {Black Strings and P-Branes},
  journal       = {Nucl. Phys. B},
  volume        = {360},
  pages         = {197-209},
  year          = {1991},
  doi           = {10.1016/0550-3213(91)90440-9}
}

@article{Sen:1992ua,
  author        = {Sen, A.},
  title         = {Rotating Charged Black Hole Solution in Heterotic String Theory},
  journal       = {Phys. Rev. Lett.},
  volume        = {69},
  pages         = {1006-1009},
  year          = {1992},
  doi           = {10.1103/PhysRevLett.69.1006} 
}

@article{Duff:1995sm,
 title = {Four-dimensional string/ string/ string triality},
journal = {Nuclear Physics B},
volume = {459},
number = {1},
pages = {125-159},
year = {1996},
issn = {0550-3213},
doi = {https://doi.org/10.1016/0550-3213(95)00555-2}, 
author = {M.J. Duff and James T. Liu and J. Rahmfeld},
}

@article{Kallosh:1992ap,
  author        = {Kallosh, R. and Linde, A. D. and Ortin, T. and Peet, A. W. and Van Proeyen, A.},
  title         = {Supersymmetry as a Cosmic Censor},
  journal       = {Phys. Rev. D},
  volume        = {46},
  pages         = {5278-5302},
  year          = {1992},
  doi           = {10.1103/PhysRevD.46.5278} 
}

@article{Ferrara:1996dd,
  author        = {Ferrara, S. and Kallosh, R.},
  title         = {Supersymmetry and Attractors},
  journal       = {Phys. Rev. D},
  volume        = {54},
  pages         = {1514-1524},
  year          = {1996},
  doi           = {10.1103/PhysRevD.54.1514} 
}

@article{Ferrara:1996um,
  author        = {Ferrara, S. and Kallosh, R.},
  title         = {Universality of Supersymmetric Attractors},
  journal       = {Phys. Rev. D},
  volume        = {54},
  pages         = {1525-1534},
  year          = {1996},
  doi           = {10.1103/PhysRevD.54.1525} 
}

@article{Strominger:1996kf,
  author        = {Strominger, A.},
  title         = {Macroscopic Entropy of {N=2} Extremal Black Holes},
  journal       = {Phys. Lett. B},
  volume        = {383},
  pages         = {39-43},
  year          = {1996},
  doi           = {10.1016/0370-2693(96)00711-3}
}

@article{Denef:2000nb,
  author        = {Denef, F.},
  title         = {Supergravity Flows and D-Brane Stability},
  journal       = {JHEP},
  volume        = {08},
  pages         = {050},
  year          = {2000},
  doi           = {10.1088/1126-6708/2000/08/050} 
}

@article{Maldacena:1997de,
  author        = {Maldacena, J. M. and Strominger, A. and Witten, E.},
  title         = {Black Hole Entropy in {M}-Theory},
  journal       = {JHEP},
  volume        = {12},
  pages         = {002},
  year          = {1997},
  doi           = {10.1088/1126-6708/1997/12/002},
  eprint        = {hep-th/9711053},
  archivePrefix = {arXiv},
  primaryClass  = {hep-th}
}

@article{Callan:1996dv,
  author        = {Callan, C. G. and Maldacena, J. M.},
  title         = {D-Brane Approach to Black Hole Quantum Mechanics},
  journal       = {Nucl. Phys. B},
  volume        = {472},
  pages         = {591-610},
  year          = {1996},
  doi           = {10.1016/0550-3213(96)00225-8} 
}

@article{Maldacena:1996ix,
  author        = {Maldacena, J. M. and Susskind, L.},
  title         = {D-Branes and Fat Black Holes},
  journal       = {Nucl. Phys. B},
  volume        = {475},
  pages         = {679-690},
  year          = {1996},
  doi           = {10.1016/0550-3213(96)00323-9} 
}

@article{Horowitz:1996fn,
  author        = {Horowitz, G. T. and Strominger, A.},
  title         = {Counting States of Near-Extremal Black Holes},
  journal       = {Phys. Rev. Lett.},
  volume        = {77},
  pages         = {2368-2371},
  year          = {1996},
  doi           = {10.1103/PhysRevLett.77.2368} 
}

@article{Breckenridge:1996is,
  author        = {Breckenridge, J. C. and Myers, R. C. and Peet, A. W. and Vafa, C.},
  title         = {D-Branes and Spinning Black Holes},
  journal       = {Phys. Lett. B},
  volume        = {391},
  pages         = {93-98},
  year          = {1997},
  doi           = {10.1016/S0370-2693(96)01460-8} 
}

@article{Mathur:2005zp,
  author        = {Mathur, S. D.},
  title         = {The Fuzzball Proposal for Black Holes: An Elementary Review},
  journal       = {Fortsch. Phys.},
  volume        = {53},
  pages         = {793-827},
  year          = {2005},
  doi           = {10.1002/prop.200410203}
}

@article{Bekenstein:1973ur,
  author        = {Bekenstein, J. D.},
  title         = {Black Holes and Entropy},
  journal       = {Phys. Rev. D},
  volume        = {7},
  pages         = {2333-2346},
  year          = {1973},
  doi           = {10.1103/PhysRevD.7.2333}
}

@article{Bardeen:1973gs,
  author        = {Bardeen, J. M. and Carter, B. and Hawking, S. W.},
  title         = {The Four Laws of Black Hole Mechanics},
  journal       = {Commun. Math. Phys.},
  volume        = {31},
  pages         = {161-170},
  year          = {1973},
  doi           = {10.1007/BF01645742}
}

@article{Hawking:1974sw,
  author        = {Hawking, S. W.},
  title         = {Particle Creation by Black Holes},
  journal       = {Commun. Math. Phys.},
  volume        = {43},
  pages         = {199-220},
  year          = {1975},
  doi           = {10.1007/BF02345020}
}

@article{Wald:1993nt,
  author        = {Wald, R. M.},
  title         = {Black Hole Entropy is the Noether Charge},
  journal       = {Phys. Rev. D},
  volume        = {48},
  pages         = {R3427-R3431},
  year          = {1993},
  doi           = {10.1103/PhysRevD.48.R3427} 
}

@article{Iyer:1994ys,
  author        = {Iyer, V. and Wald, R. M.},
  title         = {Some Properties of Noether Charge and a Proposal for Dynamical Black Hole Entropy},
  journal       = {Phys. Rev. D},
  volume        = {50},
  pages         = {846-864},
  year          = {1994},
  doi           = {10.1103/PhysRevD.50.846} 
}

@article{Regge:1957td,
  author        = {Regge, T. and Wheeler, J. A.},
  title         = {Stability of a Schwarzschild Singularity},
  journal       = {Phys. Rev.},
  volume        = {108},
  pages         = {1063-1069},
  year          = {1957},
  doi           = {10.1103/PhysRev.108.1063}
}

@article{Zerilli:1970se,
  author        = {Zerilli, F. J.},
  title         = {Effective Potential for Even-Parity Regge-Wheeler Gravitational Perturbation Equations},
  journal       = {Phys. Rev. Lett.},
  volume        = {24},
  pages         = {737-738},
  year          = {1970},
  doi           = {10.1103/PhysRevLett.24.737}
}

@article{Moncrief:1974gw, 
  title = {Odd-parity stability of a {Reissner-Nordstr\"om} black hole},
  author = {Moncrief, Vincent},
  journal = {Phys. Rev. D},
  volume = {9},
  issue = {10},
  pages = {2707-2709},
  numpages = {0},
  year = {1974},
  doi = {10.1103/PhysRevD.9.2707}
}

@article{Chandrasekhar:1975zza,
  author        = {Chandrasekhar, S. and Detweiler, S.},
  title         = {The Quasi-Normal Modes of the Schwarzschild Black Hole},
  journal       = {Proc. Roy. Soc. Lond. A},
  volume        = {344},
  pages         = {441-452},
  year          = {1975},
  doi           = {10.1098/rspa.1975.0112}
}

@article{Kodama:2003jz,
  author        = {Kodama, H. and Ishibashi, A.},
  title         = {A Master Equation for Gravitational Perturbations of Maximally Symmetric Black Holes in Higher Dimensions},
  journal       = {Prog. Theor. Phys.},
  volume        = {110},
  pages         = {701-722},
  year          = {2003},
  doi           = {10.1143/PTP.110.701} 
}

@article{Maldacena:1997re,
  author        = {Maldacena, J. M.},
  title         = {The Large {N} Limit of Superconformal Field Theories and Supergravity},
  journal       = {Adv. Theor. Math. Phys.},
  volume        = {2},
  pages         = {231-252},
  year          = {1998},
  doi           = {10.4310/ATMP.1998.v2.n2.a1} 
}

@article{Strominger:1998yg,
  author        = {Strominger, A.},
  title         = {{AdS2} Quantum Gravity and String Theory},
  journal       = {JHEP},
  volume        = {01},
  pages         = {007},
  year          = {1999},
  doi           = {10.1088/1126-6708/1999/01/007} 
}

@article{ArkaniHamed:2006dz,
  author        = {Arkani-Hamed, N. and Motl, L. and Nicolis, A. and Vafa, C.},
  title         = {The String Landscape, Black Holes and Gravity as the Weakest Force},
  journal       = {JHEP},
  volume        = {06},
  pages         = {060},
  year          = {2007},
  doi           = {10.1088/1126-6708/2007/06/060} 
}

@article{Ooguri:2006in,
  author        = {Ooguri, H. and Vafa, C.},
  title         = {On the Geometry of the String Landscape and the Swampland},
  journal       = {Nucl. Phys. B},
  volume        = {766},
  pages         = {21-33},
  year          = {2007},
  doi           = {10.1016/j.nuclphysb.2006.10.033} 
}

@article{Behrndt:1997ny,
  author = "Behrndt, Klaus and Lust, Dieter and Sabra, Wafic A.",
    title = "{Stationary solutions of N=2 supergravity}", 
    doi = "10.1016/S0550-3213(97)00633-0",
    journal = "Nucl. Phys. B",
    volume = "510",
    pages = "264-288",
    year = "1998"
}
\appendix
\section{Derivations and Conventions} 
\label{app:local-seed}
\label{app:normalisations}
\label{app:EMD-eom}
\label{app:susy-projectors}
The active torus coordinates are dimensionless,
\begin{equation}
  \sigma^i\sim\sigma^i+1,
  \qquad
  y^i=L_i\sigma^i,
  \label{eq:app-periods}
\end{equation}
so $L_i$ are periods.  The active and spectator volumes are
\begin{equation}
  V_Q=L_1L_2L_3,
  \qquad
  V_s=\prod_{\alpha=1}^3L_{s,\alpha},
  \qquad
  V_6=V_QV_s.
  \label{eq:app-volumes}
\end{equation}
In a geometric physical section,
\begin{equation}
  2\kappa_{10}^2=(2\pi)^7\alpha'^4g_s^2,
  \qquad
  16\pi G_{10}=2\kappa_{10}^2,
  \qquad
  G_4=\frac{G_{10}}{V_6}.
  \label{eq:G4app}
\end{equation}
In a T-fold frame the same four-dimensional Newton constant is obtained by evaluating the duality-invariant Planck scale in any physical section.  Entropy written in units of $G_4$ is invariant under the exact duality group.
Dimensionful harmonic-function coefficients depend on the periods, $g_s$ and $\alpha'$.  They are not needed for the invariant entropy statement, which is expressed in terms of $\Delta(Q,P)$ and $G_4$.  When desired, one may write harmonic functions in the form $H_A=1+r_A/r$ with $r_A=c_A N_A$, where the constants $c_A$ are determined by the compactification units and the type of charge. 
A local string-frame representative of the type-IIB F1-P-NS5-KKM seed uses coordinates
\begin{equation}
  (t;\ y,\psi,\chi,z^1,z^2,z^3;\ x^1,x^2,x^3).
  \label{eq:ten-coords-app}
\end{equation}
Here $y$ is the common F1/P circle, $\psi$ is the Taub-NUT fibre of the Kaluza-Klein monopole, $\chi$ is the remaining compact world-volume circle, $z^\alpha$ span the spectator three-torus, and $\vec x\in\RR^3$ is the non-compact Taub-NUT base.  Thus the six compact directions are
\begin{equation}
  T^6=S^1_\psi\times S^1_y\times S^1_\chi\times T_z^3,
  \qquad
  T_Q^3=S^1_\psi\times S^1_y\times S^1_\chi,
  \qquad
  T_s^3=T_z^3.
  \label{eq:ten-torus-app}
\end{equation}
The F1 and momentum charges lie along $y$.  The NS5-brane wraps $y,\chi,z^1,z^2,z^3$.  This is a local representative of the standard F1-P-NS5-KKM system used in four-dimensional black-hole entropy calculations \cite{Cvetic:1995uj,Johnson:1996ga,Maldacena:1996gb}.  The KKM has Taub-NUT fibre $\psi$ and world-volume $y,\chi,z^1,z^2,z^3$.  Let $H_1$, $H_5$, $H_K$ and $K$ be harmonic functions on the non-compact base.  The Taub-NUT metric is
\begin{equation}
  ds_{\rm TN}^2=H_K\,d\vec x^{\,2}+H_K^{-1}(d\psi+\mathcal A)^2,
  \qquad
  \nabla\times\mathcal A=\nabla H_K.
  \label{eq:TNmetric-app}
\end{equation}
A convenient local string-frame form is
\begin{equation}
  ds_{\rm str}^2=H_1^{-1}\left[-dt^2+dy^2+K(dt-dy)^2\right]
  +d\chi^2+ds_{T_z^3}^2+H_5ds_{\rm TN}^2.
  \label{eq:F1PNS5KKM-app}
\end{equation}
The NS two-form and dilaton may be written locally as
\begin{equation}
  B_{ty}=H_1^{-1}-1,
  \qquad
  e^{2\Phi}=\frac{H_5}{H_1}.
  \label{eq:Bdil-app}
\end{equation}
The NS5 magnetic flux is
\begin{equation}
  H_3=\star_{\rm TN}dH_5,
  \label{eq:H3-app}
\end{equation}
where the Hodge star is taken with respect to the Taub-NUT transverse space.  Reduction on the six compact directions gives the four-dimensional four-charge black hole.  Since the local seed is independent of $\psi$, $y$ and $\chi$, the active $O(3,3;\ZZ)$ patching on $T_Q^3$ is legitimate patch by patch. 
The active beta-transform is
\begin{equation}
  g_Q(n)=\begin{pmatrix}\Id_3&n\eps^{23}\\0&\Id_3\end{pmatrix},
  \qquad
  g_Q(n)^{-1}=g_Q(-n).
  \label{eq:gQ-app}
\end{equation}
A doubled vector transforms as
\begin{equation}
  \begin{pmatrix}v\\ \xi\end{pmatrix}
  \mapsto
  \begin{pmatrix}v+n\eps^{23}\xi\\ \xi\end{pmatrix}.
  \label{eq:vector-transform-app}
\end{equation}
The inverse action used in patching vector fields is
\begin{equation}
  \begin{pmatrix}v\\ \xi\end{pmatrix}
  \mapsto
  \begin{pmatrix}v-n\eps^{23}\xi\\ \xi\end{pmatrix}.
  \label{eq:inverse-vector-transform-app}
\end{equation}
The embedded full Narain transformation is
\begin{equation}
  h_Q(n)=\diag(g_Q(n),\Id_6).
  \label{eq:hQ-app}
\end{equation}
It obeys $h_Q^TLh_Q=L$ by block diagonality and by \eqref{eq:gQeta-proof}.
The algebraic flux associated with a local beta representative has one non-zero component,
\begin{equation}
  F^{23}{}_1=n,
  \label{eq:flux-app}
\end{equation}
with antisymmetric images.  It satisfies
\begin{equation}
  F_{AB}{}^B=0,
  \qquad
  F_{[AB}{}^EF_{C]E}{}^D=0.
  \label{eq:QC-app}
\end{equation}
The first condition is immediate from the index structure.  The second follows because the first bracket can output only the index associated with the first physical direction, while there is no second non-zero bracket with that index in the required lower slot.  This records the nilpotent monodromy algebra; it is not used as an external source. 
The electric/magnetic lift is
\begin{equation}
  \mathbb G_Q=\begin{pmatrix}g_Q^{-1}&0\\0&g_Q^T\end{pmatrix}.
  \label{eq:GQ-app}
\end{equation}
Let
\begin{equation}
  \Omega=\begin{pmatrix}0&\Id_6\\-\Id_6&0\end{pmatrix}.
  \label{eq:Omega-app}
\end{equation}
Then
\begin{equation}
  \mathbb G_Q^T\Omega\mathbb G_Q=\Omega,
  \label{eq:sympl-pres-app}
\end{equation}
which proves preservation of the Dirac pairing.
With $(\eps^{23})_{23}=1$ and $(\eps^{23})_{32}=-1$, the magnetic components transform by $p_Q=g_Q^{-1}p$:
\begin{equation}
  p_Q^2=p^2-n\widetilde p_3,
  \qquad
  p_Q^3=p^3+n\widetilde p_2.
  \label{eq:p-app-sign}
\end{equation}
The electric components transform by $q_Q=g_Q^Tq$, hence
\begin{equation}
  \widetilde q_Q^{2}=\widetilde q^{2}-nq_3,
  \qquad
  \widetilde q_Q^{3}=\widetilde q^{3}+nq_2.
  \label{eq:q-app-sign}
\end{equation}
Choose a lightlike basis $e_i,\widetilde e^{\,i}$ of $\Gamma^{6,6}$ with $e_i\cdot \widetilde e^{\,j}=\delta_i{}^j$ and all other pairings zero.  A seed representative may be written
\begin{equation}
  Q=N_{\rm P}e_2+N_{\rm F1}\widetilde e^{\,2},
  \qquad
  P=N_{\rm KKM}e_1+N_{\rm NS5}\widetilde e^{\,1}.
  \label{eq:QP-basis-app}
\end{equation}
Then
\begin{equation}
  Q^2=2N_{\rm P}N_{\rm F1},
  \qquad
  P^2=2N_{\rm KKM}N_{\rm NS5},
  \qquad
  Q\cdot P=0,
  \label{eq:QP-products-app}
\end{equation}
and \eqref{eq:Delta-seed} follows.
In the STU convention \eqref{eq:STUassign-final}, the invariant \eqref{eq:I4-final} reduces to
\begin{equation}
  I_4=4q_0p^1p^2p^3=4N_{\rm P}N_{\rm F1}N_{\rm NS5}N_{\rm KKM}=\Delta.
  \label{eq:I4-delta-app}
\end{equation}
Therefore the STU formula is the restriction of the full $N=4$ invariant.  If an $O(6,6;\ZZ)$ transformation sends a charge vector outside the displayed STU component representative, the invariant entropy statement remains true in the full Narain lattice. 
The inverse of the block generalised metric \eqref{eq:Hblock-final} has the standard form
\begin{equation}
  \mathscr H^{\widehat M\widehat N}=
  \begin{pmatrix}
  g^{\mu\nu} & -g^{\mu\rho}\cA_\rho{}^N
  \\
  -\cA_\rho{}^Mg^{\rho\nu} & \cM^{MN}+\cA_\rho{}^Mg^{\rho\sigma}\cA_\sigma{}^N
  \end{pmatrix}.
  \label{eq:Hinv-app}
\end{equation}
Substituting the patching rules \eqref{eq:patch-data-final} gives component-wise
\begin{align}
  g_{(b)\mu\nu}+\cA_{(b)\mu}{}^M\cM_{(b)MN}\cA_{(b)\nu}{}^N
  &=g_{(a)\mu\nu}+\cA_{(a)\mu}{}^P\cM_{(a)PQ}\cA_{(a)\nu}{}^Q,
  \label{eq:block-proof1-app}\\
  \cA_{(b)\mu}{}^M\cM_{(b)MN}
  &=(g_{ab}^T)_N{}^P\cA_{(a)\mu}{}^Q\cM_{(a)QP},
  \label{eq:block-proof2-app}\\
  \cM_{(b)MN}&=(g_{ab}^T)_M{}^P\cM_{(a)PQ}(g_{ab})^Q{}_N.
  \label{eq:block-proof3-app}
\end{align}
These equations are precisely equivalent to \eqref{eq:Hpatch-final}.  Since $g_{ab}$ is constant on overlaps, derivatives in the local equations transform tensorially.  The doubled dilaton patches as $d_{(b)}=d_{(a)}$, so the measure $e^{-2d}$ is globally defined. 
For the EMD metric \eqref{eq:EMDmetric-final},
\begin{equation}
  \sqrt{-g}=R^2\sin\theta.
  \label{eq:sqrtg-app}
\end{equation}
The Maxwell equation is
\begin{equation}
  \partial_r(R^2e^{-2\phi}F^{rt})=0.
  \label{eq:Maxwell-app}
\end{equation}
Using $g^{rr}g^{tt}=-1$ and $F_{tr}=Q_Qe^{2\phi}/R^2$, one obtains
\begin{equation}
  R^2e^{-2\phi}F^{rt}=Q_Q.
  \label{eq:Maxwell-solved-app}
\end{equation}
The magnetic equation is the Bianchi identity.  The scalar derivative is
\begin{equation}
  \phi'=-\frac{\Sigma}{r^2-\Sigma^2}=-\frac{\Sigma}{R^2}.
  \label{eq:scalar-deriv-app}
\end{equation}
The scalar equation reduces to
\begin{equation}
  \partial_r(\Delta\phi')+\frac12R^2e^{-2\phi}F^2=0,
  \label{eq:scalar-eq-app}
\end{equation}
with
\begin{equation}
  F^2=2\left(\frac{P_Q^2}{R^4}-\frac{Q_Q^2e^{4\phi}}{R^4}\right).
  \label{eq:F2-app}
\end{equation}
Substitution gives $2M\Sigma=P_Q^2-Q_Q^2$.  The independent Einstein equations reduce to the remaining two identities in \eqref{eq:EMDalgebra-final}.
For the RN-like metric, the relevant Christoffel symbols are
\begin{equation}
  \Gamma^t{}_{tr}=\frac{f'}{2f},
  \qquad
  \Gamma^r{}_{tt}=\frac12ff',
  \qquad
  \Gamma^r{}_{rr}=-\frac{f'}{2f},
  \qquad
  \Gamma^r{}_{\theta\theta}=-rf,
  \qquad
  \Gamma^\theta{}_{r\theta}=\frac1r,
  \label{eq:Christoffel-app}
\end{equation}
with the usual spherical partners.  Direct contraction gives \eqref{eq:RNricci-final}-\eqref{eq:RNricci2-final}; inserting $f=1-2M/r+\Theta^2/r^2$ gives \eqref{eq:RNmixed-final}. 
Starting from \eqref{eq:entropy-f-final}, the derivative with respect to $\mathfrak e$ gives \eqref{eq:qSen-final}.  The derivative with respect to $u$ is proportional to
\begin{equation}
  -\frac{\mathfrak e^2}{v_1^2}+\frac{P_Q^2}{v_2^2},
  \label{eq:u-deriv-app}
\end{equation}
and therefore gives the first equation in \eqref{eq:EFextremum-final}.  Combining this with the electric-charge equation gives the attractor value of $u$.  The $v_1$ and $v_2$ equations then imply
\begin{equation}
  v_1=v_2,
  \qquad
  v_2=2|P_QQ_Q|.
  \label{eq:v-ext-app}
\end{equation}
The extremised value is \eqref{eq:SWald-final}.  In the full type-IIB theory the same calculation can be written using the black-hole potential; the radius equations impose $v_1=v_2=\sqrt{|\Delta(Q_Q,P_Q)|}$. 
We use mostly-plus ten-dimensional signature.  Let $\epsilon=(\epsilon_1,\epsilon_2)$ denote the type-IIB spinor doublet, with Pauli matrices acting on the doublet index.  In the counting-frame orientation, the F1 and P charges lie along $y$.  The NS5-brane wraps $y,\chi,z^1,z^2,z^3$, and the KKM has fibre $\psi$ with world-volume $y,\chi,z^1,z^2,z^3$.  A representative mutually BPS projector set is
\begin{equation}
  P_{\rm F1}=\Gamma_{0y}\sigma_3,
  \qquad
  P_{\rm P}=\Gamma_{0y},
  \qquad
  P_{\rm NS5}=\Gamma_{0y\chi z^1z^2z^3}\sigma_3,
  \qquad
  P_{\rm KKM}=\Gamma_{0y\chi z^1z^2z^3}.
  \label{eq:susy-projectors-app}
\end{equation}
The KKM projector is the T-dual representative of the NS5 projector along the Taub-NUT fibre $\psi$.  For compatible orientation signs,
\begin{equation}
  P_A^2=1,
  \qquad
  [P_A,P_B]=0.
  \label{eq:susy-commute-app}
\end{equation}
The regular four-charge branch has one product relation among the four projectors, leaving three independent constraints.  Thus the common eigenspace has dimension
\begin{equation}
  32\left(\frac12\right)^3=4.
  \label{eq:susy-count-app}
\end{equation}
This representative computation fixes the standard one-eighth BPS count.  Exact T-duality conjugates the projector algebra and preserves the dimension of the common eigenspace.  The proof needed for the T-fold representative is not a new Killing-spinor calculation in the non-geometric frame; it is the invariance of the projector algebra under an invertible exact duality transformation. 
The counting frame is the type-IIB F1-P-NS5-KKM frame.  Its large-charge BPS index gives \eqref{eq:microIIB-main}.  The T-fold $Q$-frame is the charge local system after applying $h_Q$.  The exotic display frame is a further duality-related representative in which an NS5-brane has two transverse isometries and can be converted to a $5^2_2$ brane by T-dualities along those isometries.
The beta-shift can be written as
\begin{equation}
  \beta^{23}\text{-shift}=T_2T_3\circ(B_{23}\text{-shift})\circ T_3T_2.
  \label{eq:beta-TBT-app}
\end{equation}
This identity explains why monodromy charges appear in the exotic display frame.  The microscopic counting is not repeated there; it is the same index transported by exact duality.
The counting and display frames should not be conflated.  In the counting frame the brane orientation is chosen to make the weak-coupling index standard.  In an exotic display frame one first moves by ordinary T-dualities to a representative in which the relevant five-brane has two transverse isometry directions.  Only then does the conjugation \eqref{eq:beta-TBT-app} produce a $5^2_2$-type monodromy representative.

\end{document}